\documentclass[twocolumn]{svjour3}

\smartqed

\usepackage{tikz}
\usepackage{amsmath}
\usepackage{amssymb}
\usepackage{xifthen}
\usepackage{varwidth} 

\usepackage{hyperref}
\usepackage[T1]{fontenc}
\usepackage[utf8]{inputenc}
\usepackage{listings}
\usepackage{booktabs} 
\usepackage{colortbl}

\hyphenation{time-stamp time-stamps}

\usepackage{acro}
\usepackage{xspace}

\usepackage{algorithm}
\usepackage[noend]{algorithmic}

\let\IDeclareAcronym\DeclareAcronym
\renewcommand{\DeclareAcronym}[2]{%
 \IDeclareAcronym{#1}{%
  #2,foreign-plural={} 
  }
}
%

\DeclareAcronym{atl}{
  short = ATL,
  long = Allen's Temporal Logic,
  short-indefinite = an,
  long-indefinite = an
}

\DeclareAcronym{cfg}{
  short = CFG,
  long = Context-Free Grammar
}

\DeclareAcronym{hs}{
  short = HS,
  long  = Halpern and Shoham's logic of intervals,
  short-indefinite = an
}
\DeclareAcronym{hdc}{
  short = HDC,
  long  = Hybrid Duration Calculus,
  short-indefinite = an
}
\DeclareAcronym{rv}{
  short = RV,
  long  = Runtime Verification,
  short-indefinite = an
}
\DeclareAcronym{jpl}{
  short = JPL,
  long = Jet Propulsion Laboratory
}
\newcommand{\myquot}[1]{``#1''}

\newcommand{\set}[1]{\{#1\}}

\newcommand{\true}{\mathit{true}}
\newcommand{\false}{\mathit{false}}
\newcommand{\bigo}[1]{O(#1)}

\usepackage{url}
\makeatletter
\AtBeginDocument{%
  \def\doi#1{\url{https://doi.org/#1}}}
\makeatother

\usepackage{needspace}
\newcommand{\theorempreskip}{3\baselineskip} 


\renewenvironment{theorem}{%
  \needspace{\theorempreskip}
  \begin{oldtheorem}%
}{%
  \end{oldtheorem}%
}

\renewenvironment{proposition}{%
  \needspace{\theorempreskip}%
  \begin{oldproposition}%
}{%
  \end{oldproposition}%
}


\newcommand{\pspace}{\textsc{PSpace}\xspace}
\newcommand{\ptime}{\textsc{PTime}\xspace}

\newcommand{\exptime}{\textsc{ExpTime}\xspace}
\newcommand{\expspace}{\textsc{ExpSpace}\xspace}
\newcommand{\nexptime}{\textsc{NExpTime}\xspace}

\newcommand{\tm}{\mathcal{M}}

\newcommand{\nfer}{{\tt nfer}\xspace}
\newcommand{\Nfer}{{\tt Nfer}\xspace}

\newcommand{\etal}{et al.~}

\renewcommand{\mid}{:}

\newcommand{\natty}{\mathbb{N}}
\newcommand{\boolty}{\mathbb{B}}

\newcommand{\powerset}[1]{2^{#1}}
\newcommand*{\Cdot}{\raisebox{-0.25ex}{\scalebox{1.6}{.}}}
\newcommand{\where}[1][1]{\Cdot\ifnum #1=0{}\else{\foreach \n in {1,...,#1}{\ }}\fi} 

\newcommand{\concat}{\cdot}

\newcommand{\alphabet}{\Sigma}
\newcommand{\Ident}{\mathcal{I}}

\newcommand{\utword}{\sigma}


\newcommand{\mapty}{\mathbb{M}}
\newcommand{\emptymap}{\{\ \}}
\makeatletter
\newcommand{\pto}{}
\newcommand{\pgets}{}

\DeclareRobustCommand{\pto}{\mathrel{\mathpalette\p@to@gets\to}}
\DeclareRobustCommand{\pgets}{\mathrel{\mathpalette\p@to@gets\gets}}

\newcommand{\p@to@gets}[2]{%
  \ooalign{\hidewidth$\m@th#1\mapstochar\mkern5mu$\hidewidth\cr$\m@th#1\to$\cr}%
}
\makeatother

\newcommand{\Idkw}{\textit{id}}
\newcommand{\Startkw}{\textit{start}}
\newcommand{\Endkw}{\textit{end}}
\newcommand{\Mapkw}{\textit{map}}
\newcommand{\Idof}[1]{\Idkw(#1)}
\newcommand{\Startof}[1]{\Startkw(#1)}
\newcommand{\Endof}[1]{\Endkw(#1)}

\newcommand{\clockty}{\natty}

\newcommand{\spec}{\mathcal{D}}
\newcommand{\target}{\eta_{T}\xspace}
\newcommand{\inputids}{\alphabet}
\newcommand{\nonterms}{V}
\newcommand{\spoil}{\texttt{SPOIL}}
\newcommand{\wordtotrace}{\textit{TRACE}}
\newcommand{\tracetoword}{\textit{WORD}}
\newcommand{\tracetowordcenter}{\textit{WORD}^\centerel}
\newcommand{\unique}{\textit{UNIQ}}

\newcommand{\languageof}{\mathcal{L}}

\newcommand{\beforekw}{{\bf before}}
\newcommand{\duringkw}{{\bf during}}
\newcommand{\slicekw}{{\bf slice}}
\newcommand{\meetkw}{{\bf meet}}
\newcommand{\coincidekw}{{\bf coincide}}
\newcommand{\startkw}{{\bf start}}
\newcommand{\finishkw}{{\bf finish}}
\newcommand{\overlapkw}{{\bf overlap}}

\newcommand{\unlesskw}{{\bf unless}}
\newcommand{\containkw}{{\bf contain}}
\newcommand{\followkw}{{\bf follow}}
\newcommand{\afterkw}{{\bf after}}

\newcommand{\before}[2]{#1 \ \beforekw\  #2}

\newcommand{\after}[2]{#1 \ \unlesskw \ \afterkw\ #2}

\newcommand{\dfnferrule}[4]{#1 \leftarrow #2\:#3\ #4}
\newcommand{\dfcoinciderule}[3]{\dfnferrule{#1}{#2}{\coincidekw{}}{#3}}
\newcommand{\dfcontainrule}[3]{\dfnferrule{#1}{#2}{\unlesskw{} \ \containkw{}}{#3}}
\newcommand{\dfbeforerule}[3]{\dfnferrule{#1}{#2}{\beforekw{}}{#3}}
\newcommand{\dfmeetrule}[3]{\dfnferrule{#1}{#2}{\meetkw{}}{#3}}
\newcommand{\dffollowrule}[3]{\dfnferrule{#1}{#2}{\unlesskw{} \ \followkw{}}{#3}}

\newcommand{\dfstartrule}[3]{\dfnferrule{#1}{#2}{\startkw{}}{#3}}
\newcommand{\dffinishrule}[3]{\dfnferrule{#1}{#2}{\finishkw{}}{#3}}

\newcommand{\durationset}{I_{+}}
\newcommand{\eventset}{I_{\inputids}}
\newcommand{\reqmatch}{\text{match}_{+}}
\newcommand{\reqadd}{\text{add}_{+}}

\makeatletter
\newcommand{\shorteq}{%
  \settowidth{\@tempdima}{n}
  \resizebox{\@tempdima}{\height}{=}%
}
\makeatother


\newcommand{\eventty}{\mathbb{E}}
\newcommand{\tracety}{\eventty^*}
\newcommand{\intervalty}{\mathbb{I}}

\newcommand{\event}{\varepsilon}

\newcommand{\pool}{\pi}
\newcommand{\selection}{\partial}

\newcommand{\minimality}{\textbf{minimality}\,}

\newcommand{\interp}[2]{R[ \! [ #1 ] \! ] \: #2}

\newcommand{\seminterp}[2]{S[ \! [ #1 ] \! ] \: #2}
\newcommand{\traceinterpX}[4][]{%
\if\relax\detokenize{#4}\relax
  \if\relax\detokenize{#3}\relax
    T[ \! [ #2 ] \! ]%
  \else
  T\!{\text{\tiny#3}}[ \! [ #2 ] \! ]%
  \fi
\else
  \if\relax\detokenize{#3}\relax
  T\!{\overset{{\scriptscriptstyle#4}}{\text{\tiny#3}}}[ \! [ #2 ] \! ]%
  \else
  T\!{\overset{{\scriptscriptstyle#4}}{\text{\tiny#3}}}[ \! [ #2 ] \! ]%
  \fi
\fi%
\ifthenelse{\isempty{#1}}{}{\: #1}}

\newcommand{\traceinterp}[3][]{\traceinterpX[#3]{#2}{}{#1}}

\makeatletter
\newcommand{\xRightarrow}[2][]{\ext@arrow 0359\Rightarrowfill@{#1}{#2}}
\makeatother

\newcommand{\poolty}{\mathbb{P}}
\newcommand{\rulety}{\Delta}

\newcommand{\initf}[1]{\textit{init}(#1)}

\lstdefinelanguage{semantics}{
  morekeywords=
    {if,then,else,let,in,least,such,that,unless,it,exists}, 
  otherkeywords={}, 
  literate=
    {[:}{{$[\![$}}2
    {:]}{{$]\!]$}}2
    {Pool}{{$\poolty$}}2
    {Interval}{{$\intervalty$}}2
    {Map}{{$\mapty$}}2
    {Clock}{{$\clockty$}}4
    {Trace}{{$\tracety$}}2
    {Nats}{{$\natty$}}2
    {emptyset}{{$\varnothing$}}2
    {forall}{{$\forall$}}2
    {overlineforall}{{$\overline{\forall}$}}2
    {not}{{$\neg$}}2
    {isin}{{$\in$}}1
    {isnotin}{{$\notin$}}2
    {union}{{$\cup$}}2
    {UNION}{{$\bigcup$}}2
    {intersect}{{$\cap$}}2
    {superset}{{$\supset$}}2
    {subseteq}{{$\subseteq$}}2
    {max}{{$\max$}}2
    {min}{{$\min$}}2
    {Exists}{{$\exists$}}2
    {selection}{{$\selection$}}2
    {minimality}{{\minimality}}8
    {:-}{{$\where$}}2  
    {&&}{{$\wedge$}}2
    {||}{{$\vee$}}2
    {!=}{{$\neq$}}2
    {>=}{{$\geq$}}2
    {=<}{{$\leq$}}2
    {<.}{{$\langle$}}2
    {.>}{{$\rangle$}}2
    {...}{{$\cdots$}}2
    {oplus}{{$\oplus$}}2
    {bottom}{{$\bot$}}2
    {emptymap}{{$\emptymap$}}2
    {ominus}{{$\ominus$}}2
    {::=}{{$\doteq$}}2
    {<-}{{$\leftarrow$}}2
    {->}{{$\rightarrow$}}2
    {<=}{{$\Leftarrow$}}2
    {=>}{{$\Rightarrow$}}2
    {!=>}{{$\not\Rightarrow$}}2
    {|->}{{$\mapsto$}}2
    {Mapof}{{$\Mapkw$}}3
    {Idof}{{$\Idkw$}}2
    {Startof}{{$\Startkw$}}4
    {Endof}{{$\Endkw$}}2
    {mapkw}{{\bf map}}4
    {wherekw}{{\bf where}}6
    {RuleSem}{{$R$}}2
    {RuleSel}{{$R_{\selection}$}}2
    {RuleMin}{{$R_{\scriptscriptstyle{min}}$}}2
    {SpecSem}{{$S$}}2
    {TraceSem}{{$T$}}2
    {Lambda}{$\Lambda$}2
    {Delta}{$\Delta$}2
    {Phi}{$\Phi$}2
    {Psi}{$\Psi$}2
    {Upsilon}{$\Upsilon$}2
    {pi}{$\pi$}2
    {pi1}{$\pi_{1,1}$}2
    {pi2}{$\pi_2$}2
    {piI}{$\pi_i$}2
    {piO}{$\pi_{i+1,1}$}4
    {piJ}{$\pi_{i,j}$}3
    {piK}{$\pi_{i,j+1}$}5
    {piN}{$\pi_n$}2
    {piprime}{$\pi^{\prime}$}2
    {eta}{$\eta$}1
    {etaP}{$\eta^{\prime}$}1
    {eta1}{$\eta_1$}2
    {eta2}{$\eta_2$}2  
    {eta3}{$\eta_3$}2
    {tau}{{$\tau$}}2
    {eps}{{$\event$}}2
    {eps1}{{$\event_1$}}2
    {eps2}{{$\event_2$}}2
    {eff}{{$f$}}2
    {gee}{{$g$}}2
    {s0}{{$s$}}1
    {e0}{{$e$}}1
    {J0}{{$J$}}1
    {K0}{{$K$}}1
    {L0}{{$L$}}1
    {M0}{{$M$}}1
    {MP}{{$M^{\prime}$}}1
    {s1}{{$s_1$}}2
    {s2}{{$s_2$}}2
    {s3}{{$s_3$}}2
    {sP}{{$s^{\prime}$}}1
    {e1}{{$e_1$}}2
    {e2}{{$e_2$}}2
    {e3}{{$e_3$}}2
    {eP}{{$e^{\prime}$}}1
    {i0}{{$i$}}1
    {i1}{{$i_{1}$}}1
    {i2}{{$i_{2}$}}1
    {init}{{$init$}}2
    {topsort}{{\textit{topsort}}}4
    {lambda1}{{$\lambda_1$}}2
    {lambdaN}{{$\lambda_n$}}2
    {LambdaSet}{{$\powerset{\Lambda}$}}2
    {rule}{{$\delta$}}2
    {delta1}{{$\delta_1$}}2
    {delta2}{{$\delta_2$}}2
    {deltaN}{{$\delta_n$}}2
    {Di}{{$D_i$}}2
    {D1}{{$D_1$}}2
    {Dl}{{$D_\ell$}}2
    {,,,}{{$,\ldots,$}}3
    {Comps}{{$\spec$}}2
    {SIZEl}{{$_{\ell+1,1}$}}2
    {ENN}{$n$}1
    {DeltaList}{{$\Delta^{*}$}}2
    {RuleList}{{$\mathbf{d}$}}2
    {AllI}{{$_{i > 0}$}}3
    {AllJ}{{$_{j > 0}$}}3,
  sensitive=true, 
  morecomment=[l]{//}, 
  morecomment=[s]{/*}{*/},
  escapeinside={(*}{*)},
  stringstyle=\ttfamily,
  aboveskip=5mm,
  belowskip=5mm,
  showstringspaces=false,
  columns=flexible,
  morestring=[b]",
  numberstyle=\scriptsize,
  moredelim=[is][\em]{@}{@}
}

\newcommand{\semantics}{\lstset{language=semantics,backgroundcolor=\color{white}, frame=none,basicstyle={\normalsize},aboveskip=.25em,belowskip=.25em}}
\newcommand{\isem}{\semantics\lstinline}

\usetikzlibrary{positioning,arrows,backgrounds,automata,decorations.shapes,decorations.pathmorphing,decorations.pathreplacing,decorations.markings,decorations.text,fit,matrix}
\tikzset{
  initial text=$ $ 
}



\newcommand{\conf}[2]{c_{#1}^{#2}}
\newcommand{\divi}{\#}

\newcommand{\init}{I}
\newcommand{\acc}{A}
\newcommand{\rej}{R}
\newcommand{\invalid}{\texttt{INVLD}}
\newcommand{\Sigmatm}{\Sigma_\tm}

\newcommand{\accid}[2]{\texttt{ACC}_{#1}[#2]}

\newcommand{\aggid}[4]{\texttt{AGG}_{#1}[#2,#3,#4]}

\newcommand{\centerel}{c}
\newcommand{\rightel}{r}
\newcommand{\leftel}{\ell}

\newcommand{\confint}{\texttt{CONF}}
\newcommand{\runintprime}[1]{\texttt{RUN}_{#1}'}
\newcommand{\runint}[1]{\texttt{RUN}_{#1}}
\newcommand{\initint}[2]{\texttt{INPUT}[#1,#2]}
\newcommand{\initinttwo}[1]{\texttt{INIT}_{#1}}
\newcommand{\blankint}[1]{\texttt{BLNK}_{#1}}
\newcommand{\blank}{\text{\textvisiblespace}}



\begin{document}

\title{
The Complexity of Data-Free Nfer
\thanks{Supported by Digital Research Centre Denmark (DIREC) and the Natural Sciences and Engineering Research Council of Canada (NSERC) [ref. RGPIN-2024-04280].}}

\author{Sean Kauffman \and
        Kim G.\ Larsen \and
		Martin Zimmermann}

\institute{Sean Kauffman \at
              Queen's University, Kingston, Canada\\
              \email{sean.k@queensu.ca}
           \and
           Kim G.\ Larsen \at
              Aalborg University, Aalborg, Denmark\\
              \email{kgl@cs.aau.dk}
		\and
	Martin Zimmermann \at
	              Aalborg University, Aalborg, Denmark\\
\email{mzi@cs.aau.dk}
}
\date{Received: date / Accepted: date}

\maketitle

\begin{abstract}
Nfer is a Runtime Verification language for the analysis of event traces that applies rules to create hierarchies of time intervals.
This work examines the complexity of the evaluation and satisfiability problems for the data-free fragment of nfer.
The evaluation problem asks whether a given interval is generated by applying rules to a known input, while the satisfiability problem asks if an input exists that will generate a given interval.

By excluding data from the language, we obtain polynomial-time algorithms for the evaluation problem and for satisfiability when only considering inclusive rules.
Furthermore, we show decidability for the satisfiability problem for cycle-free specifications with a NExpTime lower bound and undecidability for satisfiability of full data-free nfer.
\keywords{Interval Logic
  \and 
Complexity
  \and 
Runtime Verification
}
\end{abstract}

\section{Introduction}

\Nfer is an interval logic for analyzing and comprehending event traces that has been used in a wide range of applications, from anomaly detection in autonomous vehicles~\cite{kauffman2020palisade} to spacecraft telemetry analysis~\cite{kauffman2017inferring}.
However, its high complexity demands that users restrict the features they incorporate into their applications to ensure tractability.
Despite this, no work exists that examines the runtime complexity of \nfer without data; an obvious restraint on the power of the language that more closely resembles propositional interval logics like \ac{hs} and Duration Calculus~\cite{goranko2004roadmap}.
These languages still tend to be undecidable in the general case, however, so it is unclear if this restriction on \nfer helps with tractability.
In this article, we show that evaluation of the data-free variant of \nfer is tractable and, futhermore, satisfiability for this variant is tractable with additional restrictions.

\Nfer was developed by scientists from NASA's \ac{jpl} in collaboration with other researchers to analyze event traces from remote systems like spacecraft~\cite{kauffman2016towards,kauffman2016nfer,kauffman2017inferring}.
In \nfer, specifications consist of rules that describe and label relationships between time periods called intervals.
Applying \nfer rules to an event trace yields a hierarchy of these intervals that is easier for humans and machines to comprehend than the raw events.

\Nfer typically operates on intervals with data, but here we define a data-free fragment of the language.
Data-free \nfer is expressive enough for many use cases, having been used, for example, to analyze the Sequential Sense-Process-Send (SSPS) dataset~\cite{kauffman2017inferring}.
Data-free \nfer is also the target for an algorithm to mine rules from real-time embedded systems~\cite{kauffman2017mining}.

Recent work analyzing the complexity of the evaluation problem for \nfer has shown that it is undecidable for the full language, but with various decidable fragments~\cite{kauffman2022complexity,kauffman2024complexity}.
These fragments mostly remain intractable, however, with \ptime complexity only possible by employing a meta-constraint on the size of the results that may not be practical in many cases.
Those works did not examine data-free \nfer as a fragment, however, despite it being an obvious restriction with precedent in the literature.
%
A major advantage of restricting \nfer to the data-free fragment is that the satisfiability problem becomes interesting.
With data, it is trivial to show that satisfiability for \nfer is undecidable.
This follows from the results in~\cite{kauffman2024complexity} where \nfer is shown to have undecidable evaluation.
One can encode a Turing machine using \nfer rules and satisfiability asks if there is an initial tape such that the machine terminates.

Without data, however, it is much less obvious if satisfiability is undecidable.
In fact, we show that satisfiability for the full data-free \nfer language is still undecidable, but we achieve decidability by restricting to a cycle-free or inclusive-only fragment, the latter of which we demonstrate is decidable in \ptime.
That the satisfiability of inclusive, data-free \nfer is decidable in \ptime has exciting implications for practitioners, since these checks can be implemented in event-trace analysis tools~\cite{kauffman2023software,sadman2025visualizing}.
We also show that the evaluation problem for data-free \nfer is in \ptime without any artificial restrictions on the size of the result from meta-constraints.

This article is a revised and extended version of a paper published at RV~2024~\cite{DBLP:conf/rv/KauffmanLZ24}, which contains all proofs omitted in the conference version and improved results on cycle-free data-free \nfer satisfiability.

\subsection{Related Work}
\label{sec:related}

Other works have examined the complexity of interval-based logics.
Halpern~\etal introduced an interval temporal logic and examined its decidability in~\cite{halpern1983itl}.
Chao\-chen~\etal found decidable and undecidable fragments of an extension of that work, Duration Calculus~\cite{CHAOCHEN1991269}, over both discrete and dense time~\cite{chaochen1993duration}.
Bolander~\etal later introduced \ac{hdc} that added the ability to name an interval and refer to it in a formula~\cite{BOLANDER2007113}.
They showed that \ac{hdc} can express Allen's relations and is decidable over discrete and dense time domains with non-elementary complexity.

Other works have investigated the complexity of \ac{hs}~\cite{halpern1991propositional}, a modal logic based on \ac{atl}.
Montanari~\etal examined the satisfiability problem for the subset of \ac{hs} over the natural numbers with only \emph{begins/begun by} and \emph{meets} operators and proved it to be \expspace-complete~\cite{montanari2010abb}.
The same authors later proved that adding the \emph{met by} operator increases the complexity of the language to be decidable only over finite total orders~\cite{montanari2010maximal}.
Aceto~\etal later examined the expressive power of all fragments of \ac{hs} over total orders~\cite{aceto2016classification}.

\section{Data-Free \texorpdfstring{\nfer}{nfer}}
\label{sec:nfer}

We denote the set of nonnegative integers as $\natty$. 
The set of Booleans is given as $\boolty = \{\true, \false\}$.
We fix a finite alphabet~$\alphabet$ of event identifiers and a finite alphabet~$\Ident$ of interval identifiers such that $\alphabet \subseteq \Ident$.
A word is a sequence of identifiers $\utword = \utword_0 \utword_1 \cdots \utword_{|\utword|-1}$ where $\utword_i \in \alphabet$.
Given a word $\utword$, we define the non-empty subsequence $\utword_{[s,e]} = \utword_s \cdots \utword_e$, where $0 \leq s \leq e \le |\utword|-1$.

An event represents a named state change in an observed system.
An event is a pair $(\eta, t)$ where $\eta \in \alphabet$ is its identifier and $t \in \clockty$ is the timestamp when it occurred.
The set of all events is $\eventty = \alphabet \times \clockty$.
A trace is a sequence of events $\tau = (\eta_0, t_0)(\eta_1, t_1) \cdots (\eta_{n-1},t_{n-1})$ where $n = |\tau|$ and $t_i \leq t_j$ for all $i < j$.

Intervals represent a named period of state in an observed system.
An interval is a triple $(\eta, s, e)$ where $\eta \in \Ident$ is its identifier, and $s, e \in \clockty$ are the starting and ending timestamps where $s \leq e$.
We denote the set of all intervals by $\intervalty$. 
A set of intervals is called a \emph{pool} and the set of all pools is $\poolty = \powerset{\intervalty}$.
We say that an interval~$i = (\eta, s, e)$ is labeled by $\eta$ and define the accessor functions $\Idof{i} = \eta$, $\Startof{i} = s$, and $\Endof{i} = e$.
An interval of duration zero is an \emph{atomic} interval.

\subsection{Syntax}
The data-free \nfer syntax consists of \emph{rules}.
There are two forms of rules: inclusive and exclusive.
Inclusive rules test for the existence of two intervals matching a temporal constraint.
Exclusive rules test for the existence of one interval and the absence of another interval matching a temporal constraint.
When such a pair is found, a new interval is produced with an identifier specified by the rule and timestamps taken from the matched intervals.
We define the syntax of these rules as follows:
\begin{itemize}
    \item Inclusive rules have the form~$\dfnferrule{\eta}{\eta_1}{\oplus}{\eta_2}$ and
    \item exclusive rules have the form~$\dfnferrule{\eta}{\eta_1}{\unlesskw \ominus}{\eta_2}$
\end{itemize}

\noindent
where $\eta, \eta_1, \eta_2 \in \Ident$ are identifiers,  
\begin{multline*}
  \oplus \in \{\beforekw{}, \meetkw{}, \duringkw{}, \coincidekw{}, \startkw{},  \finishkw{}, \\ \overlapkw{}, \slicekw{}\}  
\end{multline*}
is a \emph{clock predicate} on three intervals (one for each of $\eta,\eta_1,\text{ and } \eta_2$), and  \[\ominus \in \{\afterkw{},\followkw{},\containkw{}\}\] is a clock predicate on two intervals (one for each of $\eta_1$ and $\eta_2$).
%
For a rule~$\dfnferrule{\eta}{\eta_1}{\oplus}{\eta_2}$ or $\dfnferrule{\eta}{\eta_1}{\unlesskw \ominus}{\eta_2}$ we say that $\eta$ appears on the left-hand and the $\eta_i$ appear on the right-hand side.

\subsection{Semantics}
The semantics of the \nfer language is defined in three steps: the semantics~$R$ of individual rules on pools, the semantics $S$ of a specification (a list of rules) on pools, and the semantics $T$ of a specification on traces of events.

We first define the semantics of inclusive rules with the interpretation function $R$.
Let $\rulety$ be the set of all rules.
Semantic functions are defined using the brackets \isem{[:_:]} around syntax being given semantics.

\begin{lstlisting}
RuleSem[:_:] : Delta -> Pool -> Pool
RuleSem[:eta <- eta1 oplus eta2:]pi = { i0 isin Interval : i1,i2 isin  pi :- 
  Idof(i0) = eta &&Idof(i1) = eta1 &&Idof(i2) = eta2 &&oplus(i0,i1,i2) }
\end{lstlisting}
In the definition, an interval $i$ is a member of the produced pool when two existing intervals in $\pool$ match the identifiers $\eta_1$ and $\eta_2$ and the temporal constraint $\oplus$.
The identifier of $i$ is given in the rule and $\oplus$ defines its start and end timestamps.

The clock predicates referenced by $\oplus$ are shown in Table~\ref{tab:inc-constraints}.
These relate two intervals using the familiar \ac{atl} temporal operators~\cite{allen1983maintaining} and also specify the start and end timestamps of the produced intervals.
For the example $\beforekw{}(i,i_1,i_2)$, $i_1$ and $i_2$ are matched when $i_1$ ends \beforekw{} $i_2$ begins.
The generated interval $i$ has start and end timestamps inherited from the intervals $i_1$ and $i_2$, i.e., no new timestamps are generated by applying $\beforekw{}(i,i_1,i_2)$. This is true for all other rules as well. 

\begin{table}
    \caption{Formal definition of \nfer clock predicates for inclusive rules}
    \label{tab:inc-constraints}
    \centering
    \renewcommand{\arraystretch}{2}
    \begin{tabular}{c c}
  \toprule
  $\oplus$ & Constraints on $i, i_1,$ and $i_2$ \\
  \midrule
  
  \beforekw{} & $\Endof{i_1} < \Startof{i_2} \wedge$ \\
   & $\Startof{i} = \Startof{i_1} \wedge \Endof{i} = \Endof{i_2}$ \\

\rowcolor{gray!30}  \meetkw{} & $\Endof{i_1} = \Startof{i_2} \wedge$ \\
\rowcolor{gray!30} & $\Startof{i} = \Startof{i_1} \wedge \Endof{i} = \Endof{i_2}$ \\

    \duringkw{} & $\Startof{i_2} = \Startof{i} \leq \Startof{i_1} \wedge$ \\
     & $\Endof{i_1} \leq \Endof{i_2} = \Endof{i}$ \\

\rowcolor{gray!30}    \coincidekw{} & $\Startof{i_1} = \Startof{i_2} = \Startof{i} \wedge$ \\
\rowcolor{gray!30}  & $\Endof{i_1} = \Endof{i_2} = \Endof{i}$ \\

    \startkw{} & $\Startof{i_1} = \Startof{i_2} = \Startof{i}\: \wedge $\\
     & $\Endof{i} = \max(\Endof{i_1},\Endof{i_2})$ \\

\rowcolor{gray!30}  \finishkw{} & $\Endof{i} = \Endof{i_1} = \Endof{i_2}\: \wedge $ \\
\rowcolor{gray!30} & $\Startof{i} = \min(\Startof{i_1},\Startof{i_2})$ \\

  \overlapkw{} &  $\Startof{i_1} < \Endof{i_2} \wedge \Startof{i_2} < \Endof{i_1}\ \wedge$  \\ 
  & $\Startof{i} = \min(\Startof{i_1},\Startof{i_2}) \wedge$ \\ 
  & $\Endof{i} = \max(\Endof{i_1},\Endof{i_2})$ \\

 \rowcolor{gray!30} \slicekw{} &    $\Startof{i_1} < \Endof{i_2} \wedge \Startof{i_2} < \Endof{i_1}\ \wedge$ \\ 
 \rowcolor{gray!30} & $\Startof{i} = \max(\Startof{i_1},\Startof{i_2}) \wedge$ \\ 
 \rowcolor{gray!30} & $\Endof{i} = \min(\Endof{i_1},\Endof{i_2})$ \\

    \end{tabular}
\end{table}

In Figure~\ref{fig:inclusiveops}, the semantics of the inclusive rules is graphically represented. There, an interval is represented by a line with its label written above, time moves from left to right, and the $B$-labeled interval is drawn only once and shared by each example.
For the example $\dfnferrule{A}{B}{\beforekw{}}{C}$, the $B$-labeled and $C$-labeled intervals are matched because the $B$-labeled interval ends \beforekw{} the $C$-labeled interval begins.

\begin{figure}
    \centering
    \includegraphics[width=.45\textwidth]{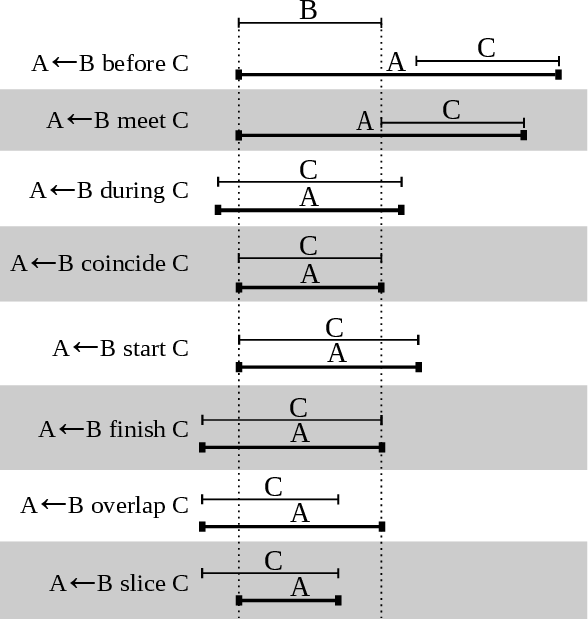}%
    \caption{\Nfer clock predicates ($\oplus$) for inclusive rules}
    \label{fig:inclusiveops}
\end{figure}

\noindent
We now define the semantics of exclusive rules with the function $R$.
\begin{lstlisting}
RuleSem[:eta <- eta1 unless ominus eta2:]pi = { i0 isin Interval : i1 isin pi :- 
  Idof(i0) = eta &&Idof(i1) = eta1 &&
  Startof(i0) = Startof(i1) && Endof(i0) = Endof(i1) &&
  not (Exists i2 isin pi :- i2 != i1 &&Idof(i2) = eta2 &&ominus(i1,i2) ) }
\end{lstlisting}

Like with inclusive rules, exclusive rules match intervals in the input pool~$\pi$ to produce a pool of new intervals.
The difference is that exclusive rules produce new intervals where one existing interval in $\pi$ matches the identifier~$\eta_1$ and no intervals exist in $\pi$ that match the identifier~$\eta_2$ such that the clock predicate~$\ominus$ holds for the $\eta_1$-labeled and the $\eta_2$-labeled interval. 

The three possibilities referenced by $\ominus$ are shown in Table~\ref{tab:exc-constraints}.
These clock predicates relate two intervals using familiar \ac{atl} temporal operators while the timestamps of the produced interval are copied from the included interval rather than being defined by the clock predicate.
For the example $\afterkw{}(i_1,i_2)$, $i_1$ and $i_2$ would be matched (if $i_2$ existed) if $i_1$ begins after $i_2$ ends, and this match would result in \emph{no} interval being produced.
If such an interval~$i_2$ is absent, an interval is produced with timestamps matching $i_1$.

\begin{table}[]
  \caption{Formal definition of \nfer clock predicates for exclusive rules}
  \label{tab:exc-constraints}
  \centering
  \renewcommand{\arraystretch}{2}
  \begin{tabular}{c cc}
  \toprule
  $\ominus$ & \phantom{} & Constraints on $i_1$ and $i_2$ \\
  \midrule

  \afterkw{} & \phantom{} & $\Startof{i_1} > \Endof{i_2} $ \\
 \rowcolor{gray!30} \followkw{} & \phantom{} & $\Startof{i_1} = \Endof{i_2} $\\
  \containkw{} & \phantom{} & $\Startof{i_2} \geq \Startof{i_1}  \wedge \Endof{i_2} \leq \Endof{i_1}$ \\
  \end{tabular}
\end{table}

In Figure~\ref{fig:exclusiveops}, the semantics of the exclusive rules is graphically represented. There, an interval is represented by a line with its label written above, time moves from left to right, and the $B$-labeled interval is drawn only once and shared by each example.
For the example $\dfnferrule{A}{B}{\unlesskw{}\ \afterkw{}}{C}$, the $B$-labeled and $C$-labeled intervals would be matched (if the $C$-labeled interval existed), which would result in the $A$-labeled interval \emph{not} being produced.

\begin{figure}[ht]
    \centering
    \includegraphics[width=.45\textwidth]{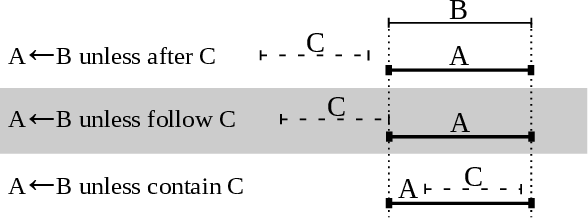}%
    \caption{\Nfer clock predicates ($\ominus$) for exclusive rules}
    \label{fig:exclusiveops}
\end{figure}



The interpretation function $S$ defines the semantics of a finite list of rules, called a specification.
Given a specification~$\delta_1 \cdots \delta_n \in \rulety^*$ and a pool $\pool \in \poolty$, $\seminterp{\_}$ recursively applies $\interp{\_}$ to the rules in order, passing each instance the union of $\pool$ with the intervals returned by already completed calls.
\begin{align*}
&S\;[\![\_]\!]\ :\ \rulety^* \rightarrow \poolty \rightarrow \poolty \\
&S\;[\![\ \delta_1 \cdots \delta_n\ ]\!]\ \pi\ =\\ 
&\hspace{2em}\begin{cases}
  S\;[\![\ \delta_2 \cdots \delta_n\ ]\!]\ (\,\pi\; \cup\; R\;[\![\ \delta_1\ ]\!]\ \pi\ ) & \textbf{if } n > 0\\
  \pi & \textbf{otherwise}
\end{cases}
\end{align*}

An \nfer specification~$D \in \rulety^*$ forms a directed graph $G(D)$ with vertices for the rules in $D$ connected by edges representing identifier dependencies.
An edge exists in $G(D)$ from $\delta$ to $\delta'$ iff there is an identifier~$\eta$ that appears on the left-hand side of $\delta$ and the right-hand side of $\delta'$.
We say that $D$ contains a cycle if $G(D)$ contains one; otherwise $D$ is cycle-free.

The rules in a cycle in an \nfer specification must be iteratively evaluated until a fixed point is reached.
As intervals may never be destroyed by rule evaluation, inclusive rules may be repeatedly evaluated, safely.
However, exclusive rules may not be evaluated until the intervals on which they depend are known to be present or absent.

\begin{example}
Suppose a specification with the two rules \ $\delta_1 = \dfmeetrule{c}{a}{b}$ and $\delta_2 = \dfmeetrule{a}{c}{b}$.
Given $\pi = \{ (a,0,1),(b,1,2),(b,2,3),$ $(b,3,4),(d,4,5) \}$,  we have\linebreak $\interp{\delta_1}{\pi} = \{ (c,0,2) \}$ and $\interp{\delta_2}{(\pi \cup \interp{\delta_1}{\pi})} = \{ (a,0,3) \} $.
The rules must be applied a second time to reach a fixed point that includes the interval $(c,0,4)$.

Now consider the consequences if the specification also contained the exclusive rule $\delta_3 = $\linebreak $\dffollowrule{b}{d}{c}$.
After the first evaluation, $(c,0,4)$ is not yet produced, so evaluating $\delta_3$ would generate $(b,4,5)$, an incorrect result.
\end{example}

As such, \emph{exclusive rules may not appear in cycles} but may appear in a specification that contains cycles among inclusive rules.
To find the cycles in a specification, we compute the strongly-connected components of the directed graph $G(D)$ formed by the rules in $D$.
Each strongly connected component represents either a cycle or an individual rule outside of a cycle.
We then sort the components in topological order and iterate over each component until a fixed point is reached.

The interpretation function $\traceinterp{\_}{}$ defines the semantics of a specification applied to a trace of events.
To ensure consistency with prior work and to simplify our presentation, we overload $\traceinterp{\_}{}$ to operate on an event trace~${\tau \in \eventty^*}$ by first converting $\tau$ to the pool $\{ \initf{e} : e \text{ is an element of } \tau \}$ where $\initf{\eta,t} = (\eta,t,t)$.

\begin{lstlisting}
TraceSem[:_:] : DeltaList -> Pool -> Pool
TraceSem[:delta1 ... deltaN:]pi = piSIZEl :- pi1 = pi && 
    Comps = SCC(delta1 ... deltaN) && (D1 ... Dl)  = topsort(Comps) :- 
    piO = UNIONAllJ piJ:-piK = SpecSem[:Di:](piJ)
\end{lstlisting}
where $\text{SCC}(\delta_1 \cdots \delta_n)$ is the set $\mathcal{D}$ of strongly connected components of the graph $G(\delta_1 \cdots \delta_n)$  and $\textit{topsort}(\mathcal{D})$ is a topological sort of these components.


%

\section{Satisfiability of Data-Free \texorpdfstring{\nfer}{nfer}}
\label{sec:satisfiability}

We are interested in the existential \nfer satisfiability problem: Given a specification~$D$, a set of identifiers $\inputids$, and a target identifier~$\target$, is there an input trace of events $\tau \in \eventty^+$ with identifiers in $\inputids$ 
such that an $\target$-labeled interval is in $\traceinterp{D}{\tau}$?
The \nfer satisfiability problem is interesting in part because of the restriction of input identifiers to $\inputids \subseteq \Ident$.
If $\target \in \inputids$, then any specification is trivially satisfiable.
When $\target \not\in \inputids$, however, then an $\target$-labeled interval must be derived.
This problem is non-trivial and, as we shall see, undecidable in general.

\begin{example}
\label{example_running}
To see how data-free \nfer specifications may be satisfiable or not, consider the following two example specifications for the target identifier $\target$ and input identifiers $\inputids = \{a,b\}$:

\newcommand{\DSat}{D_{\text{sat}}}
\newcommand{\DUnsat}{D_{\text{unsat}}}
\vspace{1em}
\hbox{\hspace{1em}$\DSat =$
\begin{varwidth}{\textwidth}
\[\begin{cases}
 & \dfnferrule{A}{a}{\beforekw{}}{b} \\
 & \dfnferrule{B}{A}{\meetkw{}}{b} \\
 & \dfnferrule{\target}{A}{\overlapkw{}}{B} \\
\end{cases}\]
\end{varwidth}
}
\hbox{\hspace{1em}$\DUnsat =$
\begin{varwidth}{\textwidth}
\[\begin{cases}
 & \dfnferrule{A}{b}{\beforekw{}}{X} \\
 & \dfnferrule{B}{a}{\meetkw{}}{b} \\
 & \dfnferrule{\target}{a}{\overlapkw{}}{B} \\
\end{cases}\]
\end{varwidth}
}

\vspace{1em}
A satisfying event trace for $\DSat$ is $\tau_1 = (a,1)(b,2)$, since 
$\traceinterp{\DSat}{\tau_1} = \{ (a,1,1),(b,2,2), (A,1,2),(B,1,2),$ $(\target,1,2)\}.$
For $\DUnsat$, no $\target$-labeled interval can be produced because $\overlapkw{}$ requires one of the two matched intervals to have positive duration: for an interval $i$, $\Endof{i} - \Startof{i} > 0$.
Since $a$-labeled intervals must be initial, they are atomic (zero duration).
That leaves $B$-labeled intervals produced by another rule.
The rule that produces $B$-labeled intervals, however, only matches initial intervals with the same timestamps.
As such, any $B$-labeled interval will also have zero duration, and the $\overlapkw{}$ rule will never be matched.
Finally, the $\beforekw{}$ rule can never be applied, as no $X$-labeled interval can be generated.
\end{example}
\subsection{Data-Free \texorpdfstring{\nfer}{nfer} Satisfiability is Undecidable}
\label{sec:undecidability}

In this section, we show the undecidability of the data-free \nfer satisfiability problem by a reduction from the emptiness problem for the intersection of two \acp{cfg}.
The undecidability result relies on the recursive nature of \nfer, i.e., an $\eta$-labeled interval can be produced from another $\eta$-labeled interval, and on its negation capabilities, i.e., via exclusive rules.

\newcommand{\etaacc}[1]{\eta_{#1}}
\begin{theorem}
\label{thm:undecidable}
The data-free \nfer satisfiability problem is undecidable.
\end{theorem}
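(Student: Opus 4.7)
The plan is to reduce the emptiness problem for the intersection of two context-free grammars (CFGs), a classically undecidable problem, to data-free \nfer satisfiability. Given CFGs $G_1 = (V_1, \Gamma, P_1, S_1)$ and $G_2 = (V_2, \Gamma, P_2, S_2)$ in Chomsky Normal Form with disjoint nonterminal sets, I would construct a specification $D$ and target identifier $\target$ over an input alphabet $\inputids$ consisting of $\Gamma$ together with one or more auxiliary separator identifiers, so that $D$ is satisfiable for $\target$ iff $L(G_1) \cap L(G_2) \neq \emptyset$.

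For each grammar $G_i$, the construction adds inclusive rules that mirror its productions: a rule of the form $\dfnferrule{A}{a}{\coincidekw{}}{a}$ for each terminal production $A \to a$, and a rule of the form $\dfnferrule{A}{B}{\meetkw{}}{C}$ (or, with a suitable use of $\beforekw{}$ and separators, a variant thereof) for each binary production $A \to BC$. These composition rules are recursive in exactly the sense highlighted before the statement of the theorem: an $A$-labeled interval can be produced from derivations that themselves produce $A$-labeled intervals, so repeated application of $S$ simulates arbitrarily deep parse trees for either grammar. A final rule $\dfnferrule{\target}{S_1}{\coincidekw{}}{S_2}$ is intended to derive $\target$ exactly when both grammars have produced start-symbol intervals over the same time span, so a coincidence witnesses a common word.

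The easier direction is the following: given $w \in L(G_1) \cap L(G_2)$, I would feed $w$ (interleaved with separator events at prescribed integer timestamps) as input and verify that the parse trees for $w$ in each grammar yield coinciding $S_1$- and $S_2$-intervals spanning the whole encoding of $w$, so $\target$ is derived. The main obstacle is the converse direction: nothing in the inclusive rules alone prevents $S_1$ and $S_2$ from being derived from two different subsequences of the input that happen to occupy the same time span, so the coincidence of $S_1$ and $S_2$ need not witness a common word. This is where the negation capability of exclusive rules, the second ingredient stressed in the outline, becomes essential: by placing separator events between consecutive characters of the encoded word and using exclusive rules of the form $\dfnferrule{\hat A}{A}{\unlesskw{} \ \containkw{}}{\text{sep}}$ to discard any nonterminal interval whose span contains an unused separator, one restricts the derivable parses to those covering a contiguous substring of the input. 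Once this contiguity invariant is enforced, two coinciding start-symbol intervals necessarily witness the same contiguous input word, a common member of $L(G_1) \cap L(G_2)$ can be read off from the trace, and the reduction is complete.
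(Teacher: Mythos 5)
Your high-level reduction is the same as the paper's (emptiness of the intersection of two CFGs in Chomsky normal form, productions mapped to rules, a final $\coincidekw{}$ rule on the two start symbols), but as written the construction has two concrete gaps, both of which the paper spends most of its proof repairing.

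First, the sequential-composition step fails. Your terminal rules produce \emph{atomic} intervals: $\dfnferrule{A}{a}{\coincidekw{}}{a}$ applied to an event at timestamp $t$ yields $(A,t,t)$. For a binary production realized as $\dfnferrule{A}{B}{\meetkw{}}{C}$, $\meetkw{}$ requires $\Endof{i_1} = \Startof{i_2}$, so two atomic intervals at \emph{distinct} timestamps never match; the rules derive nothing for any word of length at least two. Your fallback --- $\beforekw{}$ plus separators --- does not repair this: $\beforekw{}$ only requires $\Endof{i_1} < \Startof{i_2}$ and therefore matches non-adjacent intervals, allowing a ``parse'' that skips input letters, and the proposed exclusive rule $\dfnferrule{\hat A}{A}{\unlesskw{}\ \containkw{}}{\text{sep}}$ would discard \emph{every} interval spanning two or more letters, since such an interval necessarily contains the separators between its own letters; data-free \nfer has no way to distinguish a ``used'' from an ``unused'' separator. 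The paper's solution is different: it first builds, via $\beforekw{}$ rules followed by an exclusive minimality filter ($\unlesskw{}\ \containkw{}$ against the same label), a canonical contiguous tiling of unit intervals $(a,t_i,t_{i+1})$ between consecutive events, labeled by their starting event. These genuinely meet at shared endpoints, so $\meetkw{}$ then composes exactly adjacent constituents. This costs one ``ancillary'' trailing event that labels no interval, which is why the paper's statements are about $\utword \concat a$ and $\sigma_{[0,|\sigma|-2]}$.

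Second, your converse direction is unsound against traces with coincident timestamps. Satisfiability quantifies over \emph{all} traces over $\inputids$, so the adversarial trace may place an $a$ and a $b$ at the same timestamp; then $G_1$ may parse using the $a$ and $G_2$ using the $b$ at that position, producing coinciding $S_1$- and $S_2$-intervals that witness two \emph{different} words, and $\target$ is derived even when $L(G_1)\cap L(G_2)=\emptyset$. The paper's first two rule sets exist precisely to handle this: they generate a $\spoil$ interval wherever two distinct identifiers coincide and then use an exclusive rule to keep only events with unique timestamps, so that the surviving intervals are in bijection with the letters of a single well-defined word. You need both of these mechanisms (or equivalents) before either direction of your correctness claim goes through.
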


We now show how to simulate \iac{cfg}~$G$ with data-free \nfer rules with a designated identifier~$\etaacc{G}$ so that a word~$w$ is accepted by the \ac{cfg} iff applying the rules to events that correspond to $w$ generates an interval over the same period with identifier~$\etaacc{G}$.
Then the intersection of two \acp{cfg} $G_1$ and $G_2$ is nonempty if and only if applying the corresponding rules generates, for some sequence of events corresponding to a word, two intervals with the same starting and ending timestamps, one with identifier~$\etaacc{G_1}$ and one with $\etaacc{G_2}$.
The existence of two such intervals can again be captured by a data-free \nfer rule producing an interval with a target identifier.

Formally, \iac{cfg} is a four-tuple $(\nonterms, \alphabet, P, S)$, where $\nonterms$ is a finite set of non-terminals (or variables), $\alphabet$ is the finite set of terminals that are disjoint from $\nonterms$, $P$ is a finite set of productions of the form $A \rightarrow w$ where $A \in \nonterms$ and $w \in (\nonterms \cup \alphabet)^*$, and $S$ is the initial non-terminal.
We assume, without loss of generality, that a \ac{cfg} is in Chomsky-normal form~\cite{chomsky1959normal}.\footnote{Note that we, w.l.o.g., disregard the empty word.}
This means that all productions are in one of two forms: $A \rightarrow BC$ or $A \rightarrow a$ where $A,B,C \in \nonterms$, $a \in \alphabet$, and $S \notin \{B,C\}$.

Given a grammar $(\nonterms, \alphabet, P, S)$, where $A \in \nonterms$, $w,x,y \in (\nonterms \cup \alphabet)^*$, and $(A \rightarrow x) \in P$, then we say that $wAy$ yields $wxy$, written $wAy \Rightarrow wxy$.
We write $w \xRightarrow{*} y$ if $w = y$ or there exists a sequence of strings $x_1,x_2,\ldots,x_n$ for $n \geq 0$ such that $w \Rightarrow x_1 \Rightarrow x_2 \Rightarrow \cdots \Rightarrow x_n \Rightarrow y$.
The language of the grammar is $\{ w \in \alphabet^* \mid S \xRightarrow{*} w \}$.
If a word is in the language of the grammar we say it has a derivation in the grammar.
Deciding if the intersection of the languages of two \acp{cfg} is empty is undecidable~\cite{barhillel1961cfgs}.

\newcommand{\ex}{\textit{x}}
We use grammar~$G_{\ex} = (\nonterms_{\ex}, \alphabet_{\ex}, P_{\ex}, S_{\ex})$ accepting $\{a(a^n)(b^n) : n > 0\}$ as a running example, where
%
 $\nonterms_{\ex} = \{A,B,M,M',S_{\ex}\}$, $\alphabet_{\ex} = \{a,b\}$, and $P_{\ex} = \{ 
A \rightarrow  a,
B \rightarrow  b,
S_{\ex} \rightarrow  AM,
M \rightarrow AB,
M \rightarrow  AM',
M' \rightarrow  MB
\}$.
The string $aab$ has the derivation 
$S_{\ex} \Rightarrow AM \Rightarrow AAB \Rightarrow AAb \Rightarrow Aab \Rightarrow aab$.

We present six types of data-free \nfer rules to simulate the intersection of two \acp{cfg} $G = (\nonterms, \alphabet, P, S)$ and $G' = (\nonterms', \alphabet, P', S')$, where $\nonterms \cap \nonterms' = \varnothing$ and $P \cap P' = \varnothing$.
The first four steps are necessary to account for events with coincident timestamps and because simulating \iac{cfg} requires the sequential composition of non-terminals, while \nfer rules cannot perform sequential composition directly on atomic intervals.
Then the final two steps map the productions of \iac{cfg} and their intersection directly to data-free \nfer rules.
The six types of rules are:
\begin{enumerate}
    \item Rules to label non-unique timestamps in an event trace so that they can be filtered out.
    We do so, because event traces in \nfer are allowed to have events with the same timestamps while there is only one letter at each position of a word. 
    So, to simplify our translation between event traces and words, we just filter out events with non-unique timestamps. 
    \item Rules that then perform the actual event filtering to only include events with unique timestamps.
    \item Rules that label every interval in a trace by its starting event, i.e. where some event occurs at the start and some other event occurs at the end, we label the interval by the starting event.
    \item Rules that select the minimal starting-event-labeled intervals, i.e. the intervals where no other interval is subsumed by that interval.  The result of this step is a set of contiguous intervals labeled by their starting event. These minimal intervals are totally ordered and in one-to-one correspondence with the original events with unique timestamps. Thus, we have transformed the event trace into an \emph{equivalent} pool of intervals.
    \item Rules that simulate the productions of the \acp{cfg} on the pool of minimal starting-event-labeled intervals. The generated intervals  encode a derivation tree. The word corresponding to the event trace is accepted by \iac{cfg} if an interval is generated that is labeled by that grammar's initial non-terminal.
    \item A rule that labels an interval by a given target label if the simulation of the two \acp{cfg} labeled the same interval by their initial non-terminals.  The interval is generated if the word corresponding to the same event trace is accepted by both \acp{cfg}.
\end{enumerate}

We begin by relating event traces to words.
Event traces form a total preorder as some timestamps may be equal, while the symbols in a word are totally ordered by their index (no two symbols have the same index). 
To convert a word to an event trace, we add a timestamp equal to the index of the event.
Given a word $\utword \in \alphabet^*$, we define $\wordtotrace(\utword) = (\utword_0,0)(\utword_1,1)\cdots(\utword_{n-1},n-1) \text{ where } n = |\utword|$.

For example, $\wordtotrace(aab) = (a,0)(a,1)(b,2)$.

When converting an event trace to a word, however, we must only consider events with unique timestamps.
The following example trace illustrates the reasoning at this step: consider $\tau_{\ex} = (a,0)(a,1)(a,2)(b,2)(b,3)(b,4)$ where both an $a$-labeled and a $b$-labeled event occur at timestamp $2$.
To convert $\tau_{\ex}$ to a word, we want to order its identifiers using only their timestamps, and, consequently, the two events with timestamp $2$ cannot be ordered.
As such, we ensure that only events with unique timestamps affect the generation of intervals involved in simulating \iac{cfg}.
Before we show how to generate those intervals we define formally what we mean by unique timestamps.

Given a trace~$\tau = (\eta_0, t_0) (\eta_1,t_1)\cdots (\eta_{n-1},t_{n-1})$, $t_i$ is non-unique in $\tau$ if there is a $j \neq i$ with $\eta_i \neq \eta_j$ and $t_j = t_i$, otherwise $t_i$ is unique.\footnote{Note that the timestamp~$7$ in $(a, 3)(b,7)(b,7)(c,13)$ is unique, as the two events at timestamp~$7$ have the same label. However, this duplication will be removed when turning the trace into a pool, which is a set of events and therefore has no duplicates.}
Let $\unique(\tau) = \{ t_{i_0},t_{i_1},\ldots,t_{i_{k-1}} \}$ be the set of unique timestamps in $\tau$ such that $t_{i_j} < t_{i_{j'}}$ for all $j < j'$, i.e., we enumerate the unique timestamps of $\tau$ in increasing order.
Then, we define $\tracetoword(\tau) = \eta_{i_{0}} \eta_{i_{1}}\cdots \eta_{i_{k-1}}$.

We now define data-free \nfer rules that capture the definition of $\tracetoword$.
The rules first generate atomic $\spoil$ intervals where events with distinct identifiers share timestamps and then filter the events to only those that do not share timestamps with those $\spoil$ intervals.

Given the alphabet $\alphabet$, we define rules to generate $\spoil$ intervals for non-unique timestamps.
We let $\spoil$ be a new identifier ($\spoil \notin \alphabet$).
\begin{equation*}
D_1 = \{ \dfcoinciderule{\spoil}{a}{b} \mid (a, b) \in \alphabet \times \alphabet \wedge a\neq b \}
\end{equation*}
For example, applying $D_1$ to the example trace $\tau_{\ex}$ defined above results in the following intervals: 
\begin{align*}\traceinterp{D_1}{(\tau_{\ex})} = \{&(a,0,0),(a,1,1),(a,2,2),(b,2,2),\\&(\spoil,2,2),(b,3,3),(b,4,4)\}.
\end{align*}
Figure~\ref{fig:d1234} shows this example on a timeline.
In the top of the figure, the solid line shows time progressing from left to right, with the identifiers appearing in the trace given below their associated timestamps.
The new $\spoil$-labeled interval is shown below the timeline, having been generated by the rules in $D_1$ shown on the right.
The remainder of the figure relates to steps 2, 3, and~4.

\begin{figure*}
    \centering
    \includegraphics[width=.7\textwidth]{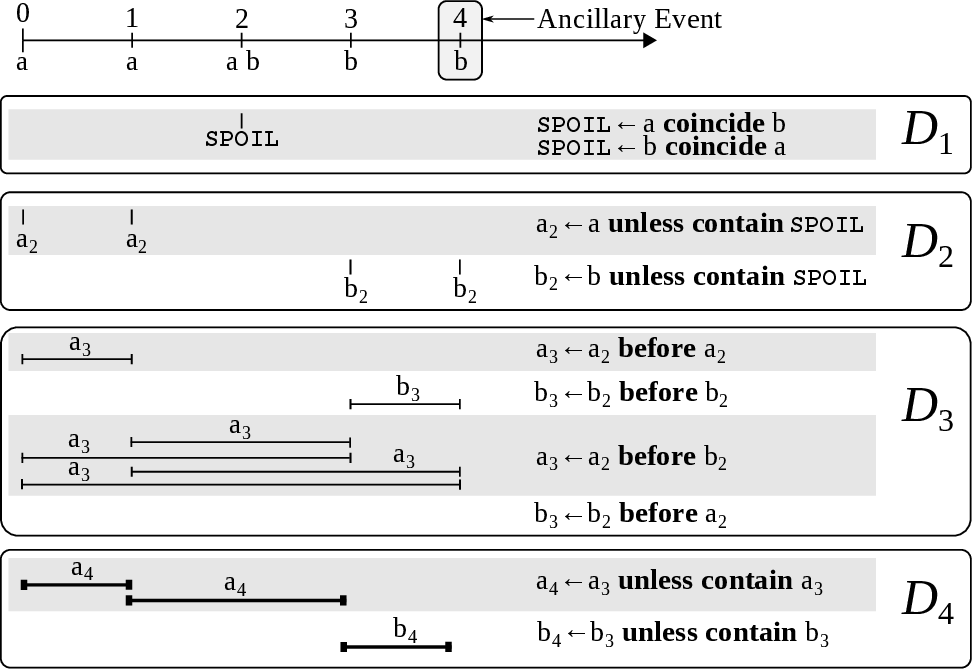}
    \caption{Example of applying steps 1-4 from the proof of Theorem~\ref{thm:undecidable}}
    \label{fig:d1234}
\end{figure*}

\begin{lemma}
\label{lemma_spoiling}
\hfill

\begin{enumerate}
    \item Given a trace $\tau$ over $\alphabet$, $\traceinterp{D_1}{(\tau)}$ characterizes the unique timestamps in $\tau$ in the following sense:
$\{ t \mid (\spoil,t,t) \in \traceinterp{D_1}{(\tau)} \} = T \setminus \unique(\tau)$, where $T$ denotes the set of (unique and non-unique) timestamps of $\tau$.

    \item Let $w \in \Sigma^*$ and let $\tau_w = \wordtotrace(w)$. Then, $\traceinterp{D_1}{(\tau_w)}$ does not contain any $\spoil$-labeled interval.
\end{enumerate}
\end{lemma}

\begin{proof}
The second item follows trivially from the definition of $\wordtotrace(w)$, as all timestamps are unique in it. 
Hence, we focus on the first item.

Let $(\spoil,t,t) \in \traceinterp{D_1}{(\tau)}$. 
As $\spoil \notin \alphabet$, the interval must have been created by applying a rule of the form~$\dfcoinciderule{\spoil}{a}{b}$ for some $a \neq b$ in $\Sigma$. 
Furthermore, as the rules in $D_1$ only create intervals with the identifier~$\spoil$, the intervals to which the rule has been applied to create~$(\spoil,t,t)$ must come from the initial pool~$\set{\initf{e} \mid e \text{ is an event in } \tau}$ on which $D_1$ is evaluated. 
Hence, $\tau$ must contain events~$(a,t)$ and $(b,t)$, i.e., $t$ is not unique in $\tau$.

Conversely, let $t$ be a non-unique timestamp in $\tau$, i.e., there are events~$(a,t)$ and $(b,t)$ in $\tau$ for some identifiers~$a \neq b$. 
These intervals are also in the initial pool~$\set{\initf{e} \mid e \text{ is an event in } \tau}$ on which $D_1$ is evaluated. 
Hence, applying the rule~$\dfcoinciderule{\spoil}{a}{b}$ to the intervals~$(a,t,t)$ and $(b,t,t)$ implies $(\spoil,t,t) \in \traceinterp{D_1}{(\tau)}$.
\qed\end{proof}

Now, we can define rules that filter the events to only those with unique timestamps by excluding any that coincide with $\spoil$-labeled intervals.
These rules ensure that the \nfer simulation of \iac{cfg} uses exactly the same events that are used in $\tracetoword$: those with unique timestamps.
Note that the intervals generated by the rules in $D_i$ for steps $i \in \{2,3,4\}$ are labeled by identifiers annotated by the step number, i.e., they are of the form~$a_i$ for $a \in \alphabet$ and $i \in \{2,3,4\}$, where we assume w.l.o.g., $a_i \notin \nonterms \cup \nonterms' \cup \alphabet$ for all such identifiers.
\begin{equation*}
D_2 = \{ \dfcontainrule{a_2}{a}{\spoil} \mid a \in \alphabet \}
\end{equation*}

Figure~\ref{fig:d1234} shows the result of applying $D_2$ to the result of $\traceinterp{D_1}{(\tau_{\ex})}$.
The intervals $(a_2,0,0)$ and $(a_2,1,1)$ annotate $a$-labeled intervals that do not coincide with a $\spoil$-labeled interval, while $(b_2,3,3)$ and $(b_2,4,4)$ annotate the $b$-labeled intervals that do not coincide with a $\spoil$-labeled interval.
No such annotated intervals are produced at timestamp 2, where the rules in $D_1$ generated a $\spoil$-labeled interval.

The intervals that result from the rules in $D_2$ are still atomic, i.e. they are effectively events and have a duration of zero.
The next step is to use those atomic intervals to generate every interval in the trace with a positive duration (restricted to those with unique starting and ending timestamps in the original trace~$\tau$).
We label every such interval with a label derived from its start.
\begin{equation*}
D_3 = \{ \dfbeforerule{a_3}{a_2}{b_2} \mid (a,b) \in \alphabet \times \alphabet \} 
\end{equation*}

As shown in Figure~\ref{fig:d1234}, the intervals generated by applying $D_3$ in our example are $(a_3,0,1)$ from the rule $\dfbeforerule{a_3}{a_2}{a_2}$, $(b_3,3,4)$ from $\dfbeforerule{b_3}{b_2}{b_2}$, and $(a_3,1,3),$ $(a_3,0,3),$ $(a_3,1,4),$ $(a_3,0,4)$ from $\dfbeforerule{a_3}{a_2}{b_2}$.

Now, we introduce rules that filter the intervals produced by the rules in $D_3$ so that only the \emph{minimal} intervals remain.
A minimal interval is one where no other interval (with the same label) is subsumed by it.
The resulting intervals form a contiguous sequence covering all unique timestamps in $\tau$ where their meeting points are the atomic intervals produced by $D_2$.
\begin{equation*}
D_4 = \{ \dfcontainrule{a_4}{a_3}{a_3} \mid a \in \alphabet \}
\end{equation*}

The reason for generating this contiguous sequence of intervals is that we need to transform the input into elements that are \emph{sequentially composable} using data-free \nfer rules.
To understand why, recall our example event trace: $\tau_{\ex} = (a,0)(a,1)(a,2)(b,2)$ $(b,3)(b,4)$.
As seen in Figure~\ref{fig:d1234}, the atomic intervals that result from applying $D_1$ and $D_2$ to this trace are $\{(a_2,0,0),(a_2,1,1),$ $(b_2,3,3),(b_2,4,4)\}$.
Because these intervals do not overlap (they are atomic and have unique timestamps) we can see from Table~\ref{tab:inc-constraints} that the only clock predicate that can match two subsequent intervals (i.e., no labeled interval exists between the end of the first and beginning of the second) is $\beforekw{}$.
The rules in $D_3$, then, do that (match intervals using $\beforekw{}$ rules) but these match both subsequent and non-subsequent intervals.
Applying the rule $\dfbeforerule{a_3}{a_2}{b_2}$, for example, produces $(a_3,1,3),$ $(a_3,0,3),$ $(a_3,1,4),$ and $(a_3,0,4)$.
To match only subsequent intervals requires the rules from step four (in $D_4$).
Applying $\dfcontainrule{a_4}{a_3}{a_3}$ only generates $(a_4,0,1)$ and $(a_4,1,3)$ because they do not \emph{contain} another $a_3$-labeled interval, while $(a_3,0,3)$, $(a_3,1,4)$, and $(a_3,0,4)$ do contain $(a_3,0,1)$ and $(a_3,1,2)$.

At this point, we must discuss what we call the \emph{Ancillary Event Phenomenon}.
Because we must generate sequentially composable intervals to simulate \iac{cfg}, and because these intervals must label the time \emph{between} events, inevitably one event per trace must be unrepresented by such intervals.
Since we choose to label the intervals by their starting event, the final event in the trace with a unique timestamp does not label an interval.
We call this the \emph{ancillary event} in a trace.
In $\tau_{\ex}$, the ancillary event is $(b,4)$.

After applying $D_1 \cup D_2 \cup D_3 \cup D_4$, we have the intervals $(a_4,0,1),(a_4,1,3),$ and $(b_4,3,4)$.
These intervals are now \emph{sequentially composable} because they (uniquely) $\meetkw{}$ at timestamps $1$ and $3$, meaning we can use the $\meetkw{}$ clock predicate to match only the contiguous intervals and no others.

\begin{lemma}\hfill
\begin{enumerate}
    \item Given a trace~$\tau = (\eta_0, t_0)(\eta_1, t_1)\cdots (\eta_{n-1},t_{n-1})$ be a trace over some $\alphabet$, let $\unique(\tau) =  \set{t_{i_0}, t_{i_1}, \ldots, t_{i_{k-1}}}$ with $t_{i_j} < t_{i_{j'}}$ for all $j < j'$, and assume we have $k > 1$.
Then, $\traceinterp{\bigcup_{i=1}^4 D_i}{(\tau)}$ contains all intervals of the form~$((\eta_{{i_{j}}})_4, t_{i_j}, t_{i_{j+1}})$ for $0 \le j < k-1$ and no other interval labeled by identifiers of the form~$a_4$ for some $a \in \Sigma$.

\item Let $w \in \Sigma^*$ and let $\tau_w = \wordtotrace(w)$. Then,\linebreak $\traceinterp{\bigcup_{i=1}^4 D_i}{(\tau_w)}$ contains all intervals of the form $((w_j)_4, j, j+1)$ for $0 \le j < |w|-1$ and no other interval labeled by identifiers of the form~$a_4$ for some $a \in \Sigma$.
\end{enumerate}
\end{lemma}

With the sequentially composable intervals produced by the rules in $D_4$, we now can simulate the productions of the two \acp{cfg}.
Recall that $P$ and $P'$ are the disjoint sets of these productions.
\begin{align*}
\begin{split}
D_5 =\ & \{\ \dfcoinciderule{A}{a_4}{a_4} \mid (A \rightarrow a) \in P \cup P'\ \}\ \cup \\
       & \{\ \dfmeetrule{A}{B}{C} \mid (A \rightarrow BC) \in P \cup P'\ \}
\end{split}
\end{align*}
Unlike the rules from $D_1 \cup D_2 \cup D_3 \cup D_4$, the rules in $D_5$ may contain cycles and must be iterated over until a fixed point is reached.

Figure~\ref{fig:d5} shows the result of applying $D_5$ to the running example.
Each rule in $D_5$, shown on the right side of the figure, maps to a production in $P_{\ex}$ and the intervals they produce simulate a derivation for $\tau_{\ex}$ in $G_{\ex}$.
Applying $\dfcoinciderule{A}{a_4}{a_4}$ produces $(A,0,1)$ and $(A,1,3)$, while $\dfcoinciderule{B}{b_4}{b_4}$ produces $(B,3,4)$.
Then, applying $\dfmeetrule{M}{A}{B}$ produces $(M,1,4)$ and applying $\dfmeetrule{S_{\ex}}{A}{M}$ produces $(S_{\ex},0,4)$.
As $S_{\ex}$ is the initial non-terminal for $G_{\ex}$, an $S_{\ex}$-labeled interval in the fixed point indicates that the trace $\tau_{\ex}$ during that interval is in the language of $G_{\ex}$.

\begin{figure*}[t]
    \centering
    \includegraphics[width=.7\textwidth]{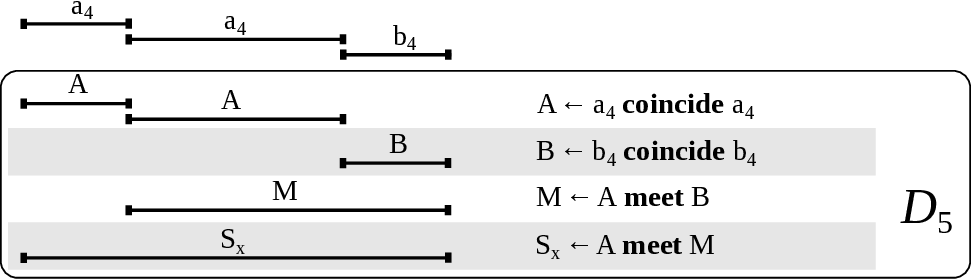}
    \caption{Example of applying step 5 from the proof of Theorem~\ref{thm:undecidable}}
    \label{fig:d5}
\end{figure*}

Next, we show that the data-free \nfer simulation has the desired properties.
We begin by showing correctness for a single grammar.

\begin{lemma}
\label{lemma_correctness_grammar}
Let $G = (\nonterms, \alphabet, P, S)$ be \iac{cfg}.
\begin{enumerate}
    \item \label{lemma_correctness_grammar_trace2word}
Given a trace~$\tau\!=\! (\eta_0, t_0)(\eta_1, t_1)\cdots (\eta_{n-1},t_{n-1})$, let $\unique(\tau)\! =\! \set{t_{i_0}, t_{i_1}, \ldots, t_{i_{k-1}}}$ with $t_{i_j} < t_{i_{j'}}$ for all $j < j'$, and assume we have $k > 1$. Let $0 \le j < j' < k-1$  and $w = \tracetoword(\tau)$, which has length~$k$.  Then,
$(S,t_{i_j},t_{i_{j'+1}}) \in \traceinterp{\bigcup_{i=1}^{5} D_i}{(\tau)}$
iff 
$w_{[j,j']} \in \languageof(G)$.

    \item \label{lemma_correctness_grammar_word2trace}
Given a word $\utword \in \alphabet^*$, fix an identifier $a \in \alphabet$ for the ancillary event, and let $0 \le j < j' < |\utword| $.  Then,
$\utword_{[j,j']} \in \languageof(G)$ 
iff 
$(S,j,j'+1) \in \traceinterp{\bigcup_{i=1}^{5} D_i}{(\ \wordtotrace(\utword \concat a)\ )}$.

\end{enumerate}
\end{lemma}

\begin{proof}
\ref{lemma_correctness_grammar_trace2word}.)
By induction over $j'-j$, showing that for every non-terminal $A \in \nonterms$,
$(A,t_{i_j},t_{i_{j'+1}}) \in \traceinterp{\bigcup_{i=1}^{5} D_i}{(\tau)}$
iff 
$A \xRightarrow{*} w_{[j,j']}$.

\ref{lemma_correctness_grammar_word2trace}.)
By induction over $j'-j$, showing that for every non-terminal $A \in \nonterms$,
$A \xRightarrow{*} \utword_{[j,j']}
$ iff $
(A,j,j'+1) \in \traceinterp{\bigcup_{i=1}^{5} D_i}{(\wordtotrace(\utword \concat a))}$.
\qed\end{proof}

Finally, we check that a word is accepted by both grammars by labeling as $\target$ where the timestamps of any $S$-and-$S'$-labeled intervals are the same.
If any word has a derivation in both $G$ and $G'$, then applying $\bigcup_{i=1}^{6} D_i$ to the corresponding trace will result in an $\target$-labeled interval in the fixed point.
\begin{equation*}
D_6 = \{ \dfcoinciderule{\target}{S}{S'} \}
\end{equation*}
For example, suppose a second grammar $G'_{\ex}$ was introduced accepting the language $a^{+}b^{+}$, where its initial non-terminal was $S'_{\ex}$.
The word $aab$ is in the language of $G'_{\ex}$ and so applying $\bigcup_{i=1}^{5} D_i$ for $G'_{\ex}$ to the trace $\tau_{\ex}$ would yield a fixed point containing the interval $(S'_{\ex},0,4)$.
Since $(S_{\ex},0,4)$ coincides with this interval, applying the rule in $D_6$ will yield $(\target,0,4)$.





This concludes the construction and we can prove of main result of this section.

\begin{proof}[of Theorem~\ref{thm:undecidable}]
Let $w \in \languageof(G) \cap \languageof(G')$ for some nonempty~$w \in \alphabet$.
Then, applying Lemma~\ref{lemma_correctness_grammar}.\ref{lemma_correctness_grammar_word2trace} to both grammars yields \\ 
$(S, 0, |w|),(S', 0, |w|) \in \traceinterp{\bigcup_{i=1}^{6} D_i}{\wordtotrace(w\concat a)}$, and thus $(\target, 0, |w|) \in \traceinterp{\bigcup_{i=1}^{6} D_i}{\wordtotrace(w\concat a)}$, i.e., there is a trace such that $\target$ is generated by applying the rules in $\bigcup_{i=1}^{6} D_i$ to it.

Conversely, let $\tau$ be a trace such that $(\target, j, j') \in \traceinterp{\bigcup_{i=1}^{6} D_i}{\tau}$. 
Thus, we must also have\\
$(S, j, j'),(S', j, j') \in \traceinterp{\bigcup_{i=1}^{6} D_i}{\tau}$. 
Applying Lemma~\ref{lemma_correctness_grammar}.\ref{lemma_correctness_grammar_trace2word} yields an infix of $\tracetoword(\tau)$ that is both in $\languageof(G)$ and $\languageof(G')$, i.e., the intersection of the languages of $G$ and $G'$ is nonempty.

Altogether, we have reduced the undecidable non-emptiness problem for the intersection of two \acp{cfg} to the data-free \nfer satisfiability problem.
\qed\end{proof}

As satisfiability of data-free \nfer is undecidable, we now turn our attention to examining fragments with decidable satisfiability.
We identify two such fragments: Inclusive \nfer, where only inclusive rules are permitted, and Cycle-free \nfer, where specifications can be evaluated without a fixed-point computation.

\subsection{Inclusive Data-Free \texorpdfstring{\nfer}{Nfer} Satisfiability is in \texorpdfstring{\ptime}{PTIME}}
\label{sec:inclusive}


We begin our study with the case where an \nfer specification may contain cycles but only contains inclusive rules.

\begin{theorem}
\label{thm:inclusive}
The data-free, inclusive \nfer satisfiability problem is in \ptime.
\end{theorem}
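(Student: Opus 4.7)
The plan is to reduce the problem to a polynomial-time saturation on the set~$\Ident$ of identifiers. Since the language is data-free and all rules are inclusive, hence monotone, an $\target$-labeled interval appears in $\traceinterp{D}{\tau}$ for some $\tau$ iff a finite derivation tree exists whose leaves are atomic input events with labels in $\inputids$, whose internal nodes are rule applications, and whose root is labeled $\target$. Because timestamps are unconstrained beyond the (non-)strict inequalities tested by the clock predicates of Table~\ref{tab:inc-constraints}, distinct subtrees can be placed in disjoint time windows and freely rescaled, so the only information that needs to flow upward through the tree besides the label itself is whether the interval has zero or positive duration.

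Accordingly, I maintain two sets $A, P \subseteq \Ident$ with the intended meaning that $\eta \in A$ (resp.\ $\eta \in P$) witnesses the existence of some input trace over~$\inputids$ yielding an atomic (resp.\ positive-duration) $\eta$-labeled interval under~$D$. Initialize $A := \inputids$ and $P := \varnothing$, then saturate. For each clock predicate $\oplus$ I precompute a table from $\{\text{At},\text{Pos}\}^2$ to $\{\text{At},\text{Pos},\bot\}$ giving, for every operand pair, the type of the output (or $\bot$ when $\oplus$ is unrealizable): $\beforekw{}$ accepts any pair and always produces positive duration; $\overlapkw{}$ and $\slicekw{}$ require at least one positive operand and always produce positive; $\coincidekw{}$ forces both operand types to agree; $\duringkw{}$ rules out $(\text{Pos},\text{At})$; and $\meetkw{}$, $\startkw{}$, $\finishkw{}$ yield atomic output iff both operands are atomic. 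Each rule $\dfnferrule{\eta}{\eta_1}{\oplus}{\eta_2}$ whose operand types are currently available then contributes $\eta$ to $A$ or $P$. Accept iff $\target \in A \cup P$ at the fixpoint. Since $|A|+|P|$ strictly grows each non-trivial round, this stabilises in at most $2|\Ident|$ rounds at $O(|D|)$ work per round, giving total time $O(|\Ident|\cdot|D|)$.

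For \emph{completeness}, any $\target$-interval in $\traceinterp{D}{\tau}$ admits a finite derivation tree whose leaves are events of $\tau$ with labels in $\inputids \subseteq A$, so a bottom-up induction on the tree shows $\target \in A \cup P$. For \emph{soundness}, I would prove by induction on the round in which $(\eta,t)$ enters the fixpoint that for any timestamps $s \le e$ with $s=e$ iff $t=\text{At}$, some trace confined to $[s,e]$ contains an $\eta$-interval with boundaries $(s,e)$ in its evaluation under $D$: the operand sub-traces are placed in the sub-windows prescribed by $\oplus$ (two disjoint windows for $\beforekw{}$, a nested window for $\duringkw{}$, a shared endpoint for $\meetkw{}$, and so on), and the inductive hypothesis is applied recursively. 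The main obstacle I expect is this last step; specifically, arguing that parasitic intervals produced by the union of the sub-traces do not invalidate the intended derivation. This is exactly where monotonicity of inclusive rules is essential, since any additional interval can only enable further derivations and never disable the one targeted, a property that fails for exclusive rules and delineates this tractable fragment from the undecidable setting of Theorem~\ref{thm:undecidable}.
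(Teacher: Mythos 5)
Your approach is essentially the paper's: reduce satisfiability to a polynomial-time saturation over identifiers that tracks both plain producibility and an atomic-versus-positive-duration distinction, justified by the monotonicity of inclusive rules. Your sets $A$ and $P$ correspond to the paper's $\eventset$ and $\durationset$, and your per-operator type table plays the role of the paper's $\reqmatch$ and $\reqadd$ together with its notions of positive-duration capable and duration-sensitive producible identifiers. You also supply more of the soundness/completeness argument (derivation trees, disjoint windows, parasitic intervals absorbed by monotonicity) than the paper, which leaves Lemma~\ref{lem:duration} unproved.

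Two concrete slips. First, your entry for \slicekw{} is wrong: with one atomic and one positive operand, say $i_1=(\eta_1,0,4)$ and $i_2=(\eta_2,2,2)$, the match condition holds but the output spans $(\max(0,2),\min(4,2))=(2,2)$, i.e.\ it is atomic, not positive. So \slicekw{} yields a positive interval only when \emph{both} operands are positive; as stated, your table would classify $Y$ as positive-capable in the specification $\dfbeforerule{X}{a}{a}$, $\dfnferrule{Y}{X}{\slicekw{}}{b}$, $\dfnferrule{\target}{Y}{\overlapkw{}}{Y}$ with $\inputids=\set{a,b}$ and declare this unsatisfiable instance SAT. (The paper's Table~\ref{tab:operator-duration} gives $\reqadd=\set{\eta_1,\eta_2}$ for \slicekw{}, which appears to share the issue, so you are in good company, but the entry should demand both operands positive.) Second, your soundness invariant is too strong over $\clockty=\natty$: a positive-duration $\eta$ built from a \beforekw{}-chain of $k$ input events needs $k$ distinct integer timestamps and hence cannot be realized on an arbitrary window $[s,e]$; you should weaken the induction hypothesis to windows of length at least some $N_\eta$ depending on $\eta$ (or just establish one witness per identifier and compose witnesses placed in fresh, sufficiently long windows). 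Neither slip changes the overall strategy or the \ptime bound, but both need repair for the induction to go through.
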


We show that there is a polynomial-time algorithm that determines if an input trace $\tau$ exists such that an $\target$-labeled interval is in $\traceinterp{D}{\tau}$ for a given specification $D$.
To do this, we show how the satisfiability of a data-free Inclusive-\nfer specification can be proven through an analysis of the rules without guessing a witnessing trace.
This is due to the monotone nature of inclusive \nfer rules: new events added to an input trace only add intervals and cannot invalidate existing ones.
We leverage this fact to show how only two factors influence the satisfiability of cycle-free, inclusive \nfer specifications: producibility from events in $\inputids$ and the requirement of positive duration for some intervals.

To begin, observe that inclusive \nfer rules are monotone in nature.  The interpretation functions $R$, $S$, and $T$ only add intervals; they never remove them.
Furthermore, if the rule is inclusive, $R$ only tests for the existence of intervals; it only tests for non-existence in the case of exclusive rules.
This means that we may always introduce new events into an input trace without needing to keep track of prior results.
The consequence is that ensuring that an $\target$-labeled interval appears in a fixed-point of $\traceinterp{D}{\tau}$ only requires showing that a rule $\delta_T$ exists in $D$ with $\target$ on its left-hand side and that $\delta_T$ may be matched by intervals resulting from $\inputids$-labeled events.
This concept is very similar to graph reachability and we define it here inductively.


\begin{definition}
Let $\inputids$ be a set of input identifiers and $D$ an inclusive \nfer specification. An identifier $\eta$ is \emph{producible} by $D$ iff $\eta \in \inputids$ or if there exists a rule $(\dfnferrule{\eta}{\eta_1}{\oplus}{\eta_2}) \in D$ and both $\eta_1$ and $\eta_2$ are producible by $D$.
\end{definition}


We now prove that satisfiability for an \nfer specification using only rules with the $\beforekw{}$ operator is equivalent to producibility.
We discuss specifications with only $\beforekw{}$-rules here because they allow us to ignore the interaction between events, which have zero duration, and \nfer operators which require positive duration.
We address this complication after proving Lemma~\ref{lemma:before}.

\begin{lemma}
\label{lemma:before}
Given a set of input identifiers $\inputids$ and a target identifier $\target$, an \nfer specification $D_b$ containing only $\beforekw{}$ rules is satisfiable iff $\target$ is producible by $D_b$.
\end{lemma}

\begin{proof}
If $\target \in \inputids$, then $D_b$ is satisfied by the trace~$(\target, 0)$.
If $\target \not\in \inputids$, then for $D_b$ to be satisfiable there must a rule $\delta_T$ in $D_b$ with $\target$ on its left-hand side.
Next, observe from the definition of $\beforekw{}$ in Table~\ref{tab:inc-constraints} that the only requirement of $\delta_T$ to produce an $\target$-labeled interval is that there exist intervals $i_1$ and $i_2$ such that $\Endof{i_1} < \Startof{i_2}$.
Clearly, if $\Idof{i_1} \in \inputids$ and $\Idof{i_2} \in \inputids$ we can create an input trace that satisfies this.
If either identifier on the right-hand side of $\delta_T$ is not in $\inputids$, then apply the same logic inductively for that identifier.

The reverse follows by a similarly straightforward induction: if an $\target$-labeled interval is producible then $D_b$ is satisfiable.
\qed\end{proof}

To permit inclusive operators beyond \beforekw{}, we must address the requirement of \emph{positive duration}.
To see why we need to address positive duration, take, for example, the $\overlapkw{}$ operator.
Again from Table~\ref{tab:inc-constraints}, we see that $\overlapkw{}$ requires that $\Startof{i_1} < \Endof{i_2}$ and $ \Startof{i_2} < \Endof{i_1}$.
If we assume a zero duration for $i_2$, we still must have positive duration for $i_1$: $\Startof{i_1} < \Startof{i_2} = \Endof{i_2} < \Endof{i_1}$, and the same holds for $i_2$ if we assume zero duration for $i_1$.
This means that, for an $\overlapkw{}$-rule to match, at least one interval it matches must have positive duration.
As such, $\overlapkw{}$-rules cannot match two initial intervals (events), as they have zero duration.

Thus, producibility is insufficient to show satisfiability for inclusive-\nfer specifications.
We must augment our definition of what is producible to account for what intervals may be produced with positive duration.

Table~\ref{tab:operator-duration} defines two functions from rules to sets of subsets of identifiers, $\reqmatch : \rulety \rightarrow \powerset{\powerset{\Ident}}$ and $\reqadd : \rulety \rightarrow \powerset{\powerset{\Ident}}$.
The $\reqmatch$ function returns the identifiers that must appear in intervals with positive duration for a given rule to match (produce an interval).
The $\reqadd$ function returns the identifiers that must appear in intervals with positive duration for a given rule to produce an interval with a positive duration.
Both functions return values in Conjunctive-Normal Form (CNF), meaning that at least one element of each set must have positive duration.
For example, $\reqmatch(\dfnferrule{\eta}{\eta_1}{\meetkw{}}{\eta_2}) = \varnothing$ because $\meetkw{}$ can match two intervals with zero duration, but $\reqadd(\dfnferrule{\eta}{\eta_1}{\meetkw{}}{\eta_2}) = \set{\eta_1, \eta_2}$ because at least one of the two intervals it matches must have positive duration for the result to have positive duration.

\begin{table}[ht]
    \caption{Positive duration requirements on $\delta = (\dfnferrule{\eta}{\eta_1}{\oplus}{\eta_2})$ in CNF (for the sake of readability, we omit outer $\{\}$)}
    \label{tab:operator-duration}
    \centering
    \begin{tabular}{c|c|c}
      $\oplus$      & $\reqmatch(\dfnferrule{\eta}{\eta_1}{\oplus}{\eta_2})$ & $\reqadd(\dfnferrule{\eta}{\eta_1}{\oplus}{\eta_2})$\\
      \hline
      \beforekw{}   & $\varnothing$          & $\varnothing$\\
      \meetkw{}     & $\varnothing$          & $\set{\eta_1, \eta_2}$ \\
      \duringkw{}   & $\varnothing$          & $\set{\eta_2}$\\
      \coincidekw{} & $\varnothing$          & $\set{\eta_1},\set{\eta_2}$\\
      \startkw{}    & $\varnothing$         & $\set{\eta_1, \eta_2}$\\
      \finishkw{}   & $\varnothing$         & $\set{\eta_1, \eta_2}$\\
      \overlapkw{}  & $\set{\eta_1, \eta_2}$ & $\set{\eta_1, \eta_2}$\\
      \slicekw{}    & $\set{\eta_1, \eta_2}$ & $\set{\eta_1, \eta_2}$\\
    \end{tabular}
\end{table}


\begin{definition}
Let $\inputids$ be a set of input identifiers  and $D$ an inclusive \nfer specification. An identifier $\eta$ is \emph{positive-duration capable} in $D$ iff there exists a rule $(\dfnferrule{\eta}{\eta_1}{\oplus}{\eta_2}) = \delta \in D$ such that for all $A \in \reqadd(\delta)$, there exists an identifier in $A$ that is positive-duration capable in $D$.
\end{definition} 
Note that $\beforekw{}$-rules~$\delta$ are always positive-duration capable, as we have $\reqadd(\delta) = \emptyset$ for such $\delta$. 
We now define duration-sensitive producibility for identifiers in an inclusive \nfer specification using $\reqmatch$ and the definition of positive-duration capable identifiers.

\begin{definition}
\label{def:duration-sensitive}
Let $\inputids$ be a set of input identifiers  and $D$ an inclusive \nfer specification. An identifier $\eta$ is \emph{duration-sensitive producible} by $D$ iff $\eta \in \inputids$ or if there exists a rule $(\dfnferrule{\eta}{\eta_1}{\oplus}{\eta_2}) = \delta \in D$ such that $\eta_1$ and $\eta_2$ are both duration-sensitive producible by $D$ and for all $M \in \reqmatch(\delta)$, there exists an identifier in $M$ that is positive-duration capable in $D$.
\end{definition}

It should be clear that the consequence of Definition~\ref{def:duration-sensitive} is the following lemma.

\begin{lemma}
\label{lem:duration}
Given a target identifier $\target$, an inclusive \nfer specification $D$ is satisfiable iff $\target$ is duration-sensitive producible by $D$.
\end{lemma}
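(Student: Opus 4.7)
The plan is to prove the equivalence by establishing each direction separately, with the backward direction being the more delicate construction.

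For the \emph{only if} direction (satisfiability implies duration-sensitive producibility), I would fix a witnessing trace $\tau$ and prove, by induction on the stage of the fixed-point iteration at which an interval first appears in $\traceinterp{D}{\tau}$, the following strengthened statement: whenever $(\eta, s, e) \in \traceinterp{D}{\tau}$, then $\eta$ is duration-sensitive producible by $D$, and moreover if $e > s$ then $\eta$ is also positive-duration capable in $D$. The base case covers initial (atomic) intervals from events in $\inputids$, which trivially satisfy the first clause. The inductive step considers an interval produced by some rule $\delta = \dfnferrule{\eta}{\eta_1}{\oplus}{\eta_2}$ matching intervals $i_1, i_2$ already in the pool. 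By the induction hypothesis, $\eta_1$ and $\eta_2$ are duration-sensitive producible. Inspecting Table~\ref{tab:inc-constraints} alongside Table~\ref{tab:operator-duration}, whenever a set $A \in \reqmatch(\delta)$ is nonempty, the clock predicate $\oplus$ forces at least one of the matching intervals with identifier in $A$ to have positive duration; the induction hypothesis then supplies positive-duration capability for that identifier, discharging the side condition in the definition of duration-sensitive producibility. The second clause (positive-duration capability when $e > s$) is handled analogously using $\reqadd(\delta)$.

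For the \emph{if} direction (duration-sensitive producibility implies satisfiability), I would construct an input trace by recursion on the (well-founded) derivation tree of duration-sensitive producibility for $\target$. Concretely, I define two mutually recursive constructions: given an identifier $\eta$ that is duration-sensitive producible, build a trace that generates an $\eta$-labeled atomic interval at some prescribed timestamp; given an identifier $\eta$ that is positive-duration capable, build a trace that generates an $\eta$-labeled interval of positive duration spanning a prescribed window. In both cases, if $\eta \in \inputids$, emit the event $(\eta, t)$ (for positive duration, a separate sub-construction via a witnessing $\beforekw{}$-rule or other $\reqadd$-free rule in the producibility chain furnishes the needed span). Otherwise, pick a witnessing rule $\delta = \dfnferrule{\eta}{\eta_1}{\oplus}{\eta_2}$ and recursively construct sub-traces for $\eta_1$ and $\eta_2$, placing them at the timestamps required by the clock predicate $\oplus$ in Table~\ref{tab:inc-constraints}, and upgrading each identifier in $\reqmatch(\delta)$ to its positive-duration construction. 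Merging the sub-traces yields the desired witness.

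The correctness of the construction relies critically on \emph{monotonicity} of inclusive rules: because $R$, $S$, and $T$ only add intervals and inclusive rules only test for existence, any unintended interactions between sub-traces can only introduce extra intervals and cannot prevent the target rule from firing. This means the sub-traces can be concatenated with disjoint (or any) time offsets without bookkeeping of negative side effects. The main obstacle I expect is the careful interleaving of the two mutually recursive constructions so that the positive-duration requirements from $\reqmatch$ and $\reqadd$ are simultaneously satisfied, and verifying that the well-foundedness of the definitions (as least fixed points over a finite identifier set) makes the recursion terminate even in the presence of cycles in $G(D)$. Once these bookkeeping issues are handled, applying $\traceinterp{D}{\cdot}$ to the constructed trace produces the claimed $\target$-labeled interval, establishing satisfiability.
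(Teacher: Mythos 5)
The paper itself offers no argument for this lemma---it is asserted as an immediate consequence of Definition~\ref{def:duration-sensitive}, with the actual proof deferred to the extended version---so there is no official proof to compare against. Your forward direction (induction on the fixed-point stage, with the strengthened invariant that any interval of positive duration carries a positive-duration capable identifier) is sound and is the natural way to argue it.

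Your backward direction, however, has a genuine gap. The recursive construction for a positive-duration capable identifier $\eta$ picks a witnessing rule $\dfnferrule{\eta}{\eta_1}{\oplus}{\eta_2}$ and ``recursively constructs sub-traces for $\eta_1$ and $\eta_2$.'' But the paper's definition of positive-duration capable only requires that every $A \in \reqadd(\delta)$ contain a positive-duration capable identifier; it places \emph{no} producibility requirement on $\eta_1$ and $\eta_2$, so your recursion can be sent to an identifier for which no interval can ever be generated. Concretely, take $\inputids = \set{a}$ and the rules $\dfbeforerule{B}{X}{X}$, $\dfmeetrule{B}{a}{a}$, and $\dfnferrule{\target}{B}{\overlapkw{}}{B}$, where $X \notin \inputids$ and $X$ appears on no left-hand side. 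Since $\reqadd$ of a $\beforekw{}$-rule is empty, $B$ is positive-duration capable via the first rule; $B$ is duration-sensitive producible via the $\meetkw{}$-rule; hence $\target$ is duration-sensitive producible via the $\overlapkw{}$-rule. Yet every $B$-labeled interval that can actually be generated is atomic (the $\beforekw{}$-rule never fires because no $X$-labeled interval exists, and $\meetkw{}$ applied to two atomic $a$-intervals yields an atomic interval), so the $\overlapkw{}$-rule never matches and the specification is unsatisfiable. Your construction gets stuck exactly where it would have to build a trace generating an $X$-labeled interval. The repair is to grant positive-duration capability only to identifiers whose witnessing rule also has duration-sensitive producible right-hand sides---this is precisely what Algorithm~\ref{alg:incsat} computes, since it guards the update of $\durationset$ by the test $\set{\eta_1,\eta_2} \cap \eventset = \set{\eta_1,\eta_2}$---and to state that your two mutually recursive constructions are invoked only on identifiers in the least fixed point of that joint definition. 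With that strengthening, the rest of your argument (placing sub-traces according to Table~\ref{tab:inc-constraints} and merging them safely by monotonicity of inclusive rules) goes through.
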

\begin{proof}
    The proof follows directly from the definitions of satisfiability and duration-sensitive producibility.
    For a specification~$D$ to be satisfiable, given a target identifier~$\target$ and a set of input identifiers~$\alphabet$, there must be an input trace $\tau$ of events with identifiers from $\alphabet$ such that $\traceinterp{D}{\tau}$ contains an interval with identifier $\target$.
    Clearly, if $\target$ is not in $\alphabet$, then there must be a rule $(\dfnferrule{\target}{\eta_1}{\oplus}{\eta_2}) = \delta \in D$ and $D$ must also be satisfiable for target identifiers $\eta_1$ and $\eta_2$.
    For a rule to produce an interval it must meet precisely the conditions in Definition~\ref{def:duration-sensitive}.
    The remainder of the proof is a straightforward induction over the rules in $D$.

    For the other direction, for a target identifier $\target$ to be to be duration-sensitive producible by a specification $D$, either $\target$ is in the input identifiers $\alphabet$ or there exists an input trace $\tau$ with events labeled by identifiers in $\alphabet$ such that a $\target$-labeled interval is in $\traceinterp{D}{\tau}$, which is the definition of satisfiability.
    \qed
\end{proof}
Now, we can prove Theorem~\ref{thm:inclusive}.

\begin{proof}
By Lemma~\ref{lem:duration}, satisfiability for data-free inclusive \nfer is equivalent to duration-sensitive producibility.

Algorithm~\ref{alg:incsat} implements Definition~\ref{def:duration-sensitive} to determine the satisfiability of an inclusive \nfer specification.
It works by computing sets of positive-duration capable identifiers ($\durationset$) and duration-sensitive producible identifiers ($\eventset$).
Recall that the topological sort of the strongly-connected components of the graph formed by the rules of an \nfer specification gives a dependency ordering for the evaluation of those rules.
This order is computed on Line~\ref{alg:incsat:topsort} and the strongly-connected components $\mathcal{D}$ are iterated over.
This set is then looped over $|\mathcal{D}|$ times on Lines~\ref{alg:incsat:repeat}~and~\ref{alg:incsat:loop}.
In the worst case, we must loop $|\mathcal{D}|$ times to reach a fixed point.
For each rule, then, Line~\ref{alg:incsat:producible} tests that both the identifiers on the right-hand side are duration-sensitive producible for that rule.
If they are duration-sensitive producible, it then tests on Lines~\ref{alg:incsat:match}~and~\ref{alg:incsat:add} that the rule's duration requirements are met using $\reqmatch$ and $\reqadd$.
If, after iterating over the rules, $\target$ is labeled as duration-sensitive producible, then the $D$ is satisfiable.


\begin{algorithm}
\caption{Algorithm checking Data-Free Inclusive-\nfer satisfiability}
\label{alg:incsat}
\begin{algorithmic}[1]
\REQUIRE{Specification~$D$, identifiers~$\inputids$, target identifier~$\target$}
\STATE{$\durationset \gets \varnothing,\;\; \eventset \gets \inputids$}
\FOR{$\mathcal{D} \in \textit{topsort}(\text{SCC}(D))$} \label{alg:incsat:topsort}
  \FOR{$1 \cdots |\mathcal{D}|$} \label{alg:incsat:repeat}
    \FOR{$\dfnferrule{\eta}{\eta_1}{\oplus}{\eta_2} \in \mathcal{D}$}\label{alg:incsat:loop}
    \IF{$\set{\eta_1, \eta_2} \cap \eventset = \set{\eta_1, \eta_2}$}\label{alg:incsat:producible}
      \IF{$\forall M \in \reqmatch(\dfnferrule{\eta}{\eta_1}{\oplus}{\eta_2}) \where M \cap \durationset \neq \varnothing$}\label{alg:incsat:match}
        \STATE{$\eventset \gets \eventset \cup \set{\eta}$}
      \ENDIF
      \IF{$\forall A \in \reqadd(\dfnferrule{\eta}{\eta_1}{\oplus}{\eta_2}) \where A \cap \durationset \neq \varnothing$}\label{alg:incsat:add}
        \STATE{$\durationset \gets \durationset \cup \set{\eta}$}
      \ENDIF
    \ENDIF
    \ENDFOR
    \STATE{\textbf{end for} \textit{looping over rules in $\mathcal{D}$}}
  \ENDFOR
  \STATE{\textbf{end for} \textit{repeating $|\mathcal{D}|$ times}}
\ENDFOR
\STATE{\textbf{end for} \textit{looping over strongly-connected components of $D$}}
\STATE \textbf{if} $\target \in \eventset$ \textbf{then return} SAT
\STATE \textbf{else} \textbf{return} UNSAT
\end{algorithmic}
\end{algorithm}

We conclude by showing that Algorithm~\ref{alg:incsat} runs in $\bigo{n^2}$ where $n$ is the number of rules:
The topological sort of the strongly-connected components of the graph of $D$ can be computed in linear time.
Then each rule is visited at most $|D|$ times, if all rules are in the same strongly-connected component.
All other operations are sub-linear.
\qed\end{proof}

We illustrate Algorithm~\ref{alg:incsat} in the following example.

\begin{example}
Suppose a target identifier $\target$ and a set of input identifiers $\inputids = \set{a,b}$.
We now consider the result of applying Algorithm~\ref{alg:incsat} to a satisfiable list of inclusive rules with no cycles.
\newcommand{\rulelistsep}{. }
\begin{enumerate}
    \item $\dfnferrule{A}{b}{\beforekw{}}{X}$\rulelistsep
    $X$ is not in $\eventset$ ($\durationset = \varnothing$, $\eventset = \set{a,b}$)
    \item $\dfnferrule{A}{a}{\beforekw{}}{b}$\rulelistsep
    $a$ and $b$ are both in $\eventset$ and $\beforekw{}$ matches and adds duration without any other requirements ($\durationset = \set{A}$, $\eventset = \set{a,b,A}$)
    \item $\dfnferrule{B}{a}{\meetkw{}}{b}$\rulelistsep
    $a$ and $b$ are both in $\eventset$ and $\reqmatch$ for $\meetkw{}$ does not require positive duration, but $\reqadd$ does ($\durationset = \set{A}$, $\eventset = \set{a,b,A,B}$)
    \item $\dfnferrule{\target}{a}{\overlapkw{}}{B}$\rulelistsep
    $a$ and $B$ are both in $\eventset$, but $\reqmatch$ and $\reqadd$ require one of $a,B$ to be positive-duration capable ($\durationset = \set{A}$, $\eventset = \set{a,b,A,B}$)
    \item $\dfnferrule{\target}{A}{\overlapkw{}}{B}$\rulelistsep
    $A$ and $B$ are both in $\eventset$, and $A$ is positive-duration capable, meeting the requirement ($\durationset = \set{A,\target}$, $\eventset = \set{a,b,A,B,\target}$)
\end{enumerate}
\end{example}





\subsection{Cycle-Free Data-Free \texorpdfstring{\nfer}{Nfer} Satisfiability is Decidable}
\label{sec:cyclefree}

\newcommand{\op}{\textbf{op}}

Next, we consider data-free \nfer with inclusive and exclusive rules, but without cycles.
We conjecture that data-free cycle-free \nfer satisfiability is \nexptime-complete but only prove that it is \nexptime-hard and decidable.
Here, we begin with proving decidability obtained by a transformation to \emph{monadic first order logic} (MFO) (see, e.g., \cite{gtw02} for details) over strings.

\begin{theorem}
\label{MFO}
    The cycle-free data-free \nfer satisfiability problem is decidable.
\end{theorem}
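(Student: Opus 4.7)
The approach is to reduce the cycle-free data-free \nfer satisfiability problem to the satisfiability problem for monadic first-order logic (MFO) over finite strings, which is decidable by the B\"uchi--Elgot--Trakhtenbrot theorem. Cycle-freeness is precisely the feature that makes this translation work: the graph $G(D)$ is a DAG, so every identifier appearing in $D$ has a rank, and its defining rules depend only on strictly lower-ranked identifiers. This permits the inductive definition of an MFO formula $\varphi_\eta(x,y)$ for every identifier $\eta$ such that, over a suitable string encoding of a trace $\tau$, $\varphi_\eta(x,y)$ holds iff an $\eta$-labeled interval with endpoints $x$ and $y$ lies in $\traceinterp{D}{\tau}$.

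First I would encode traces as finite words. Since \nfer permits several events to share a timestamp, I would take the MFO alphabet to be $\Gamma = 2^{\inputids}\setminus\{\varnothing\}$ and represent a trace as the word that lists, in strictly increasing timestamp order, the set of event identifiers occurring at each distinct timestamp. Positions in the string then correspond bijectively to the timestamps occurring in the trace, so the clock predicates, which consult only timestamp order, reduce to first-order statements over position order. A unary predicate $P_a$ for each $a \in \inputids$ records membership of $a$ in the label set at a position, and the usual $\le$ on positions is available.

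Next I would define $\varphi_\eta(x,y)$ by induction along a topological sort of $G(D)$. For $\eta \in \inputids$, set $\varphi_\eta(x,y) \equiv (x=y) \wedge P_\eta(x)$. For an inclusive rule $\dfnferrule{\eta}{\eta_1}{\oplus}{\eta_2}$, translate the constraints of Table~\ref{tab:inc-constraints} into a first-order formula $\psi_\oplus(x,y,x_1,y_1,x_2,y_2)$; note that the $\max$ and $\min$ used by \startkw{}, \finishkw{}, \overlapkw{}, and \slicekw{} are first-order definable from $\le$. The rule's contribution to $\varphi_\eta$ is $\exists x_1,y_1,x_2,y_2.\ \varphi_{\eta_1}(x_1,y_1) \wedge \varphi_{\eta_2}(x_2,y_2) \wedge \psi_\oplus(x,y,x_1,y_1,x_2,y_2)$. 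For an exclusive rule $\dfnferrule{\eta}{\eta_1}{\unlesskw\ominus}{\eta_2}$, use $\varphi_{\eta_1}(x,y) \wedge \neg\exists x_2,y_2.\ \varphi_{\eta_2}(x_2,y_2) \wedge \chi(x,y,x_2,y_2) \wedge \psi_\ominus(x,y,x_2,y_2)$, where $\chi$ is $\neg(x_2 = x \wedge y_2 = y)$ when $\eta_1 = \eta_2$ and $\top$ otherwise. If several rules share the same left-hand-side identifier, disjoin their contributions, and retain the base clause whenever $\eta \in \inputids$ also appears on some left-hand side.

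The \nfer instance is then satisfiable iff the MFO sentence $\exists x\,\exists y.\,\varphi_{\target}(x,y)$ has a finite model over $\Gamma$, which is decidable. The main technical obstacle is proving that the inductive definition faithfully captures the fixed-point semantics $\traceinterp{D}{\cdot}$; cycle-freeness makes this manageable because an induction along the topological sort matches the layered pool construction performed by $S$. Particular care is required for the $i_2 \neq i_1$ side condition in exclusive rules (one must compare both endpoints when the two right-hand-side identifiers coincide), and for confirming that the $\Gamma$-encoding realizes precisely the traces over $\inputids$ up to timestamp spacing, so that MFO satisfiability over $\Gamma^*$ matches \nfer satisfiability over $\eventty^+$.
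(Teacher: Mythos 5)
Your proposal is correct and follows essentially the same route as the paper: encode traces as words over a powerset alphabet, build $\varphi_\eta(x,y)$ by induction along a topological sort of $G(D)$ with the base case $(x=y)\wedge P_\eta(x)$, translate each clock predicate into a first-order constraint on endpoints (with negation for exclusive rules), and invoke the B\"uchi--Elgot--Trakhtenbrot decidability of MFO over finite strings. Your extra care about the $i_2\neq i_1$ side condition and about identifiers that are both inputs and left-hand sides goes slightly beyond what the paper spells out, but does not change the argument.
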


\begin{proof}
\newcommand{\mfoform}[1]{{\varphi}_{D,{#1}}}
Given a cycle-free specification $D$, we will, for each identifier $\eta$, construct an MFO formula $\mfoform{\eta}(t_0,t_1)$ with free variables $t_0, t_1$, such that $\eta$ is satisfiable with respect to $D$ iff $\mfoform{\eta}(t_0,t_1)$ is satisfiable, i.e. there exists a string (word) $w$ over $\powerset{\Sigma}$ and an assignment $v:\{t_0, t_1\}\rightarrow \{0,\ldots, |w|-1\}$ such that $w,v\models \mfoform{\eta}(t_0,t_1)$. 
Note, that for an event identifier $\eta\in\Sigma$, $\eta(\cdot)$ is a monadic predicate, where $\eta(t)$ evaluates to true in a string $w$ over $2^\Sigma$ at position $t$ if and only if the set of $w$ at position $t$ contains $\eta$.  That is $w,v\models \eta(t)$ if and only if $\eta\in w_{v(t)}$.

Now from $w, v$, where $w,v\models \mfoform{\eta}(t_0,t_1)$, a satisfying input trace for $\eta$ may  be obtained as the concatenation $\tau_{w,v}=( w_{t_0},t_0)( w_{t_0+1},t_0+1)\cdots ( w_{t_1},t_1)$, where for a set 
$w=\{\eta_1,\ldots,\eta_k\}\subseteq\Sigma$ and $t\in \clockty$, we denote by $( w, t)$ the (any) string $(\eta_1,t)(\eta_2,t)\cdots(\eta_k,t)$.

Since $D$ is cycle-free, we may order the identifiers by a topological sort of the directed graph formed by the rules.  The construction of $\mfoform{\eta}(t_0,t_1)$ now proceeds by induction on this order.  In the base case, $\eta$ is an event identifier.  Here $\mfoform{\eta}(t_0,t_1)=
(t_0=t_1\wedge\eta(t_0))$.
For the inductive case, $\mfoform{\eta}(t_0,t_1)$ is obtained by a disjunction of all the rules for $\eta$ in $D$, i.e.:
\[ \mfoform{\eta}(t_0,t_1) = \bigvee\nolimits_{\dfnferrule{\eta}{\eta_1}{\op}{\eta_2}\in D}  \psi_{D,\eta_1\, \op\,\,\eta_2}(t_0,t_1)
\]
 where $\op \in \{\oplus, \unlesskw{}\ \ominus\}$. Here the definition\linebreak $\psi_{D,\eta_1\, \op\,\,\eta_2}(t_0,t_1)$ is obtained using the MFO formulas for $\eta_1$ and $\eta_2$ from the induction hypothesis.
  Here we just give the definition for  two rules, one inclusive
and one exclusive rule, leaving the remaining rules for the reader to provide.
\begin{multline*}
    \psi_{D,\before{\eta_1}{\eta_2}}(t_0,t_1) = \\
    \exists t'_0, t'_1\where t_0\leq t'_1< t'_0\leq t_1 \wedge \mfoform{\eta_1}(t_0,t'_1) \wedge 
           \mfoform{\eta_2}(t'_0,t_1)\\
    \psi_{D,\after{\eta_1}{\eta_2}}(t_0,t_1) = \\
    \mfoform{\eta_1}(t_0,t_1) \wedge 
    \forall t'_0, t'_1\where (t'_0 \leq t'_1 < t_0\leq t_1) 
    \Rightarrow \\
    \neg \mfoform{\eta_2}(t'_0,t'_1)
\end{multline*}

Thus, decidability of MFO satisfiability over finite strings~\cite{buchi,elgot,trakhtenbrot} yields decidability of cycle-free data-free \nfer satisfiability.
\qed\end{proof}

Though the reduction to MFO in Theorem~\ref{MFO}  yields the desired decidability result, it comes with a non-elementary complexity \cite{stockmeyer}. 
In the remainder of this section, we present a \nexptime lower bound. In fact, we conjecture that every satisfiable specification has an exponentially long satisfying trace. This would yield a matching upper bound.
Note that the problem with establishing a tight upper bound comes from the arbitrary nesting of exclusive rules, such that one rule may need to match so as to not produce an interval that blocks another rule.
This possibility blocks constructions that rely on simple induction over the rules and introduces barriers to, for example, game-based solutions.

\begin{theorem}
\label{lemma_cyclefree_lb}
The cycle-free data-free \nfer satisfiability problem is \nexptime-hard.
\end{theorem}

Let us fix a nondeterministic Turing machine~$\tm = (Q, \Sigmatm, \Gamma, q_\init, q_\acc, q_\rej, \Delta )$ with exponential running time where $\Sigmatm$ is the input alphabet, $\Gamma$ is the tape alphabet, $q_\init,q_\acc , q_\rej \in Q$ are the initial, accepting, and rejecting state, respectively, and where $\Delta \subseteq Q \times \Gamma \times Q \times \Gamma \times \set{-1, 0, +1}$ is the transition relation.
As the running time of $\tm$ is exponentially bounded, there is a polynomial~$p\colon \natty \rightarrow \natty$ such that $\tm$ halts after at most $2^{p(|x|)}-2$ steps on any input~$x \in \Sigmatm^*$.
Thus, a configuration of $\tm$ on an input~$x$ can be seen as a word over the alphabet~$\Gamma \cup \Gamma\times Q$ of length~$2^{p(|x|)}-1$ with exactly one letter from $\Gamma\times Q$, whose position encodes the position of the head. 

For technical convenience, we assume $p(|x|) \ge 2$ for all inputs~$x$, which implies that each configuration has at least length~$3$.
Similarly, it will also be convenient to ensure that all accepting runs on an input~$x$ have length~$2^{p(|x|)}$.
Hence, we assume that for every tape alphabet symbol~$a \in \Gamma$, the transition~$(q_\acc,a,q_\acc,a,0)$ is in $\Delta$, but no other transition~$(q_\acc,a,q',b,d)$ for some $q' \in Q$, $b \in \Gamma$ and $d \in \set{-1,0,+1}$, i.e., the accepting state is a sink with self-loops.

Under this assumption, an accepting run of $\tm$ on an input~$x$ can be defined to be a sequence of exactly $2^{p(|x|)}$ configurations~$\conf{1}{}, \conf{2}{}, \ldots , \conf{2^{p(|x|)}}{}$ of length~$2^{p(|x|)}-1$ such that $\conf{1}{}$ is the initial configuration of $\tm$ on $x$, each $\conf{i}{}$ is a successor configuration of $\conf{i-1}{}$, and $\conf{2^{p(|x|)}}{}$ is an accepting configuration.
We will write $\conf{1}{}\divi \conf{1}{}\divi \cdots\divi \conf{2^{p(|x|)}}{}$ for such a run, where $\divi$ is a divider symbol.

Let us also fix an input~$x$ and define $n = p(|x|)$. 
We reduce deciding the existence of an accepting run of $\tm$ on $x$ to the satisfiability problem for cycle-free data-free \nfer, i.e., the specification~$D_{\tm,x}$ we construct is satisfied by an event trace~$\tau$ if and only if a subsequence of $\tau$ encodes an accepting run of $\tm$ on $x$.

We continue by introducing our encoding of configurations, which adds some redundancy to simplify the implementation of the checking of the successor configurations using \nfer rules.
To this end, consider the alphabet~$\Sigma = (\Gamma \cup \Gamma \times Q \cup \set{\divi})^3$. 
In the following, we use roman letters (typically $a$) for elements from $\Sigma$ and Greek letters (typically $\sigma$) for elements from $\Gamma \cup \Gamma \times Q \cup \set{\divi}$.
For $a = (\sigma_1, \sigma_2, \sigma_3) \in \Sigma$, we define $\leftel(a) = \sigma_1$, $\centerel(a) = \sigma_2$, and $\rightel(a) = \sigma_3$.
Now, we say that a word~$a_0 a_1 \cdots a_{m-1} \in \Sigma^*$ is well-formed if we have $\centerel(a_i) = \leftel(a_{i+1})$ and $\rightel(a_i) = \centerel(a_{i+1})$ for all $0 \le i < m-1$.
A well-formed word~$a_0 a_1 \cdots a_{m-1} \in \Sigma^*$ encodes the word~$\centerel(a_0)\centerel(a_1) \cdots \centerel(a_{m-1}) \in (\Gamma \cup \Gamma \times Q \cup \set{\divi})^*$. Let us note that multiple words over $\Sigma$ encode a word over $\Gamma \cup \Gamma \times Q \cup \set{\divi}$, as $\leftel(a_0)$ and $\rightel(a_{m-1})$ are not fixed by our requirements.

Recall that in the proof of Theorem~\ref{thm:undecidable}, where we also encoded words over some alphabet as event traces, we introduced the functions~$\wordtotrace$ and $\tracetoword$ to translate between these two formalisms and discarded events happening at the same time when translating from traces to words.
Here, we adapt these definitions.
Given a word~$\sigma_0 \sigma_1 \cdots \sigma_{m-1} \in (\Gamma \cup \Gamma \times Q \cup \set{\divi})^*$, we define 
\begin{multline*}
\wordtotrace(\sigma_0 \sigma_1 \cdots \sigma_{m-1} ) = 
((\#,\sigma_0,\sigma_1),0)\\
((\sigma_0, \sigma_1, \sigma_2),1)
((\sigma_1, \sigma_2, \sigma_3),2)
\cdots\\
((\sigma_{m-3}, \sigma_{m-2}, \sigma_{m-1}),m-2)
((\sigma_{m-2}, \sigma_{m-1}, \#),m-1)
\end{multline*}
which is well-formed.

Dually, given an event trace~$\tau = (\eta_0, t_0)(\eta_1,t_1)\cdots$ $(\eta_{n-1},t_{n-1})$ over $\Sigma$, recall that we say that $t_i$ is non-unique in $\tau$ if there is a $j \neq i$ with $\eta_i \neq \eta_j$ and $t_j = t_i$, otherwise $t_i$ is unique. 
Also, let $\unique(\tau) = \{ t_{i_0},t_{i_1},\ldots,t_{i_{k-1}} \}$ be the set of unique timestamps in $\tau$ such that $t_{i_j} < t_{i_{j'}}$ for all $j < j'$
Then, we define 
$\tracetoword(\tau) = \eta_{i_{0}} \eta_{i_{1}}\cdots \eta_{i_{k-1}} \in \Sigma^*$ 
and
$\tracetowordcenter(\tau) = \centerel(\eta_{i_{0}}) \centerel(\eta_{i_{1}})\cdots \centerel(\eta_{i_{k-1}}) \in (\Gamma \cup \Gamma \times Q \cup \set{\divi})^*$.

We now begin presenting our \nfer rules that, for the fixed input~$x$ and fixed Turing machine~$\tm$, generate a target identifier (to be introduced later) from a given event trace~$\tau$ over $\Sigma$ if and only if $\tau$ has an infix~$\tau'$ such that $\tracetoword(\tau')$ encodes an accepting run of $\tm$ on $x$.
This involves the following steps:
\begin{itemize}
    \item Dealing with non-unique events as in the simulation of \acp{cfg} via \nfer rules.
    \item Checking whether subtraces encode configurations of length~$2^n-1$.
    \item Capturing the initial configuration using \nfer rules.
    \item Capturing accepting configurations using \nfer rules.
    \item Expressing the transition relation using \nfer rules. 
\end{itemize}
For didactic reasons, we do not proceed in that order. 

To handle non-unique events in event traces, we reuse the rules presented in the proof of Theorem~\ref{thm:undecidable}.
Specifically, we define
\begin{align*}
    &D_1 = \{ \dfcoinciderule{\spoil}{a}{b} \mid (a, b) \in \Sigma \times \Sigma \wedge a\neq b \},\\
    &D_2 = \{ \dfcontainrule{a_2}{a}{\spoil} \mid a \in \Sigma \},\\
    &D_3 = \{ \dfbeforerule{a_3}{a_2}{b_2} \mid (a,b) \in \Sigma \times \Sigma \}, \text{ and}\\ 
    &D_4 = \{ \dfcontainrule{a_4}{a_3}{a_3} \mid a \in \Sigma \}    
\end{align*}
as before to \myquot{spoil} all events that do not have a unique timestamp and then turn the sequence of events with a unique timestamps into a contiguous sequence of intervals covering all unique timestamps.
We call the application of these four sets of rules the initialization phase.

After the initialization phase, we will no longer be working with the identifiers in $\Sigma$, only with the identifiers in $\set{a_4 \mid a \in \Sigma}$ and fresh ones we introduce below. To not clutter our notation beyond the unavoidable, we drop the subscript~$4$ and (slightly abusively) write $a$ instead of $a_4$ from now on.

As stated above, our goal is to write an \nfer specification that is satisfied exactly by event traces that have infixes that encode accepting runs of $\tm$ on $w$.
We begin by capturing well-formedness: Here, and in the following, we use the special identifier~$\invalid$ to signify that the desired format of the encoding of an accepting run is violated. Later, we will use exclusive rules to test for the non-existence of such errors.
For example, consider the set 
\begin{multline*}
D_5 = \set{ \dfmeetrule{\invalid}{a}{b} \mid (a,b) \in \Sigma\times \Sigma \text{ and }\\ \centerel(a) \neq \leftel(b) \text{ or } \rightel(a) \neq \centerel(b) }
\end{multline*}
of rules that create an $\invalid$-labeled interval if and only if two consecutive non-spoiled identifiers in $\tau$ violate the well-formedness condition. 

In the following, we formalize the effect of the rules in $D_5$. 
Here, and in all following such formalizations, we will, for the sake of readability, only consider event traces without non-unique events. 
We can do so without loss of generality, as we have already seen that the rules in $\bigcup_{i=1}^4 D_i$ filter out non-unique events (see Lemma~\ref{lemma_spoiling}).

\begin{proposition}
\hfill
\begin{enumerate}
    \item Let $w_0 \cdots w_{m-1} \in \Sigma^*$, $0 \le j < j' \le m-1$, and let $a \in \Sigma$ be some identifier for the ancillary event. 
    Then, $w_{[j,j']}$ is well-formed if and only if there is no interval of the form~$(\invalid, t, t+2)$ for some $j \le t \le j'-2$ in $\traceinterp{D_5}{(\wordtotrace(\tau\cdot a))}$.

    \item Let $\tau = (\eta_0, t_0) \cdots (\eta_{m-1}, t_{m-1})$ be an event trace where each event is unique in $\tau$, and let $0 \le j < j' \le m-1$.
    Then, there is no interval of the form $(\invalid, t, t')$ for some $t_{j} \le t\le t' \le t_{j'}$ in $\traceinterp{D_5}{(\tau)}$ if and only if $\tracetoword((\eta_j,t_j)\cdots(\eta_{j'},t_{j'})) $ is well-formed.
\end{enumerate}
\end{proposition}

We now move on to the handling of the transition relation.
As configurations have exponential length, we \myquot{only} need to compare events at an exponential distance to check the correct application of the transition relation. 
This, and similar checks, will be implemented using a hierarchy of $\meetkw{}$-rules creating intervals forming a binary tree of height~$n$, which thus has $2^n$ leaves.
To illustrate the construction, let us begin with the following task:
We want to determine whether an interval spans $2^n$ consecutive non-spoiled events that all have an identifier~$a$ with $\centerel(a) = \blank$.
Later, this will be useful to identify the encoding of the initial configuration. 

To this end, consider the fresh identifiers~$\blankint{k}$ for $1 \le k \le n$ and the set
\begin{align*}D_{6} = &\set{
\dfmeetrule{\blankint{1}}{a}{b} \mid \centerel(a) = \centerel(b) = \blank
    } \cup\\
&   \set{\dfmeetrule{\blankint{k}}{\blankint{k-1}}{\blankint{k-1}} \mid 1 < k \le n}
\end{align*}
of rules. 

\begin{proposition}
\hfill
\begin{enumerate}
    \item Let $w_0 \cdots w_{m-1} \in \Sigma^*$, $0 \le j < j' \le m-1$, and let $a \in \Sigma$ be some identifier for the ancillary event. 
    Then, $\centerel(w_j) \cdots \centerel(w_{j'}) = (\blank)^{2^k}$ if and only if\linebreak $(\blankint{k}, j, j'+1) \in \traceinterp{D_6}{(\wordtotrace(\tau\cdot a))}$.

    \item Let $\tau = (\eta_0, t_0) \cdots (\eta_{m-1}, t_{m-1})$ be an event trace where each event is unique in $\tau$, and let $0 \le j < j' < m-1$.
    Then, $(\blankint{k}, t_{j}, t_{j'+1}) \in \traceinterp{D_6}{(\tau)}$ if and only if $\tracetowordcenter((\eta_j,t_j)\cdots(\eta_{j'},t_{j'})) = (\blank)^{2^k}$.

\end{enumerate}

\end{proposition}

For other checks, we apply similar constructions using a hierarchy of $\meetkw{}$-rules, but we additionally need to aggregate some information about the events covered by the intervals we are constructing. 
For example, to check whether a configuration is accepting, we need to check whether $2^n$ consecutive events contain at least one event labeled with an identifier~$a$ such that $\centerel(a) \in \Gamma \times \set{q_\acc}$.
To this end, consider the fresh identifiers~$\accid{k}{f}$ for $1 \le k \le n$ and $f \in \set{0,1}$ and the following set~$D_7$ of rules:
\begin{align*}
&\set{ \dfmeetrule{\accid{1}{1}}{a}{b} \mid (a,b) \in \Sigma\times \Sigma \text{ and at least }\\
&\phantom{mmmmmmmmm} \text{ one of $\centerel(a), \centerel(b)$ is in } \Gamma \times \set{q_\acc}  }\  \cup\\
&\set{ \dfmeetrule{\accid{1}{0}}{a}{b} \mid (a,b) \in \Sigma \times \Sigma \text{ and none of }\\
&\phantom{mmmmmmmmmmmmm} \text{$\centerel(a), \centerel(b)$ is in }  \Gamma \times \set{q_\acc} }\  \cup\\
&\set{ \dfmeetrule{\accid{k}{\max\set{f_\ell, f_r}}}{\accid{k-1}{f_\ell}}{\accid{k-1}{f_r}} \mid\\
&\phantom{mmmmmmmmmmm}  1 < k \le n \text{ and } f_\ell, f_r \in \set{0,1}}. 
\end{align*}
Intuitively, the flag~$f$ keeps track of whether the events covered by the interval contain an element from $\Gamma \times \set{q_\acc}$. 

\begin{proposition}
\hfill
\begin{enumerate}
    \item Let $w_0 \cdots w_{m-1} \in \Sigma^*$, $0 \le j < j' \le m-1$, and let $a \in \Sigma$ be some identifier for the ancillary event. 
    Then, $w_{[j,j']}$ has length~$2^k$ and there is a $j \le t \le j'$ with $\centerel(w_{t}) \in \Gamma \times \set{q_\acc}$ if and only if $(\accid{k}{1}, j, j'+1) \in \traceinterp{D_7}{(\wordtotrace(\tau\cdot a))}$.

    \item Let $\tau = (\eta_0, t_0) \cdots (\eta_{m-1}, t_{m-1})$ be an event trace where each event is unique in $\tau$, and let $0 \le j < j' \le m-1$.
    Then, $(\accid{k}{1}, t_{j}, t_{j'+1}) \in \traceinterp{D_7}{(\tau)}$ if and only if $\tracetowordcenter((\eta_j,t_j)\cdots(\eta_{j'},t_{j'}))$ has length~$2^k$ and there is a $j \le j'' \le j'$ with $\centerel(\eta_{j''}) \in \Gamma \times \set{q_\acc}$.
\end{enumerate}
\end{proposition}

Next, we aggregate, again over all intervals of length $2^k$ for $k \le n$, the identifiers of their first and last event and whether the events covered by the interval contain a $\divi$ and/or whether they contain the encoding of a state. 
Here, we use the set~$S = \set{\varepsilon, Q, Q\divi, \divi Q, Q\divi Q}$ to keep track of that information, as well as some ordering information.
So, consider the fresh identifiers~$\aggid{k}{a}{b}{s}$ for $1 \le k < n$, $a,b \in \Sigma$, and $s \in S$, and the set~$D_8$ containing the following rule
\[
\dfmeetrule{\aggid{1}{a}{b}{s}}{a}{b}
\]
for $(a,b) \in \Sigma\times\Sigma$ and $s \in S$,
if one of the following conditions is satisfied:
\begin{itemize}

    \item $\centerel(a) \in \Gamma$, $\centerel(b) \in \Gamma $, and $s = \varepsilon$.

    \item one of $\centerel(a), \centerel(b) $ is in $\Gamma \times Q$ and the other is in $\Gamma$, and $s = Q$.

    \item one of $\centerel(a), \centerel(b) $ is $\divi$ and the other is in $\Gamma$, and $s = \divi$.

    \item $\centerel(a) \in \Gamma \times Q$, $\centerel(b) = \divi $, and $s = Q\divi$.

    \item $\centerel(a) = \divi $, $\centerel(b) \in \Gamma \times Q$, and $s = \divi Q$. 

    
    
    

    
    
\end{itemize}
As already mentioned above, an identifier of the form $\aggid{1}{a}{b}{s}$ keeps track of the first identifier from $\Sigma$ ($a$), the last identifier from $\Sigma$ ($b$), and whether or not they encode states and or the divider symbol, and in which order these appear. Note that we, for example, do not allow two divider symbols without a state in between or two states without a divider symbol in between, as these patterns can never appear in our encodings of runs.

Also, $D_8$ contains, for each $1 < k < n$, the rule
\[
\dfmeetrule{\aggid{k}{a}{d}{s}}{\aggid{k-1}{a}{b}{s_\ell}}{\aggid{k-1}{c}{d}{s_r}}
\]
for $a,b,c,d \in \Sigma$ and $s_\ell,s_r \in S$ 
if one of the following conditions is satisfied:
\begin{itemize}

    \item $s_\ell = \varepsilon$ and  $s = s_r$.
    
    \item $s_r = \varepsilon$ and  $s = s_\ell$.

    \item $s_\ell = \divi$, $s_r = Q$, and $s = \divi Q$. 

    \item $s_\ell = Q$, $s_r =\divi$, and $s = Q \divi$.

    \item $s_\ell = Q \divi$, $s_r = Q$, and $s = Q \divi Q$.

    \item $s_\ell = Q$, $s_r =\divi Q$, and $s = Q \divi Q$. 
    
\end{itemize}
The rules aggregate the same information described above for longer intervals. Note again that we, for example, do not allow two divider symbols, as there cannot be two such symbols in an infix of length~$2^n$ of the encoding of a run.

Finally, consider the fresh identifier~$\confint$. The set $D_8$ also contains the rule
\[
\dfmeetrule{\confint}{
\aggid{n-1}{a}{b}{s_\ell}
}{
\aggid{n-1}{c}{d}{s_r}
}
\]
if $\centerel(a) = \divi$, $\rightel(d) = \divi$, $s_\ell \cdot \divi$ and $ s_r =  Q$, and the rule
\[
\dfmeetrule{\invalid}{
\aggid{n-1}{a}{b}{s_\ell}
}{
\aggid{n-1}{c}{d}{s_r}
}
\]
if $a$ and $d$ violate the transition relation, e.g.,
we have $a = (\sigma_0, (\sigma_1, q), \sigma_2)$ and $d = (\sigma_0, \sigma_1', (\sigma_2, q'))$ but\linebreak $(q, \sigma_1, q', \sigma_1', +1)$ is not a transition of $\tm$. 
For the sake of readability, we do not spell out all cases in detail here.

\begin{proposition}
\hfill
\begin{enumerate}
    \item Let $w_0 \cdots w_{m-1} \in \Sigma^*$, $0 \le j < m- 2\cdot 2^n$, and let $a \in \Sigma$ be some identifier for the ancillary event. 
    Then, $w_{[j,j+2\cdot 2^n]}$ has the form~$\divi c_1 \divi c_2$ for configurations~$c_1,c_2$ where $c_2$ is a successor configuration of $c_1$ if and only if 
    $(\confint, j, j+2^n),(\confint, j+2^n, j+2\cdot2^n) \in \traceinterp{D_5 \cup D_8}{(\wordtotrace(\tau\cdot a))}$ and 
    $(\invalid, j',j'+2^n) \notin \traceinterp{D_5 \cup D_8}{(\wordtotrace(\tau\cdot a))}$ for all $j \le j' \le j+2^n$. 
    
    \item Let $\tau = (\eta_0, t_0) \cdots (\eta_{m-1}, t_{m-1})$ be an event trace where each event is unique in $\tau$, and let $0 \le j < m-2\cdot 2^n$.
    Then, $(\confint, t_{j}, t_{j+2^n}),(\confint,t_{j+2^n},t_{j+2\cdot 2^n}) \in \traceinterp{D_5 \cup D_8}{(\tau)}$ and $(\invalid, t_{j'}, t_{j' +2^n}) \notin\traceinterp{D_5 \cup D_8}{(\tau)} $ for all $j \le j' \le j+2^n$  if and only if $\tracetowordcenter((\eta_j,t_j)\cdots(\eta_{j+2\cdot 2^n},t_{j+2\cdot2^n}))$ has the form $\divi c_1 \divi c_2$ for configurations~$c_1,c_2$ where $c_2$ is a successor configuration of $c_1$.

\end{enumerate}
\end{proposition}

Hence, applying the rules defined thus far creates intervals identifying configurations and intervals marking situations where the transition relation is not applied correctly. 
Hence, as a next step, we can piece together configurations into run infixes, as long as the transition relation is always applied correctly.
To this end, consider the fresh identifiers~$\runintprime{1}$ and $\runint{k}$ for $1 \le k \le n$ and the set 
\begin{align*}
D_9 = &\set{
\dfmeetrule{\runintprime{1}}{\confint}{\confint}
}
\ \cup\\
&\set{
\dfcontainrule{\runint{1}}{\runintprime{1}}{\invalid}
}
\ \cup\\
&\set{
\dfmeetrule{\runint{k}}{\runint{k-1}}{\runint{k-1}} \mid 1 < k \le n
}.
\end{align*}
of rules.

\begin{proposition}
\hfill
\begin{enumerate}
    \item Let $w_0 \cdots w_{m-1} \in \Sigma^*$, $0 \le j < m- 2\cdot 2^{kn}$, and let $a \in \Sigma$ be some identifier for the ancillary event.
    Then, $w_{[j,j+2\cdot 2^{kn}]}$ has the form~$\divi c_1 \divi \cdots \divi c_{2^k}$ for configurations~$c_1,\ldots, c_{2^k}$ where $c_{j'+1}$ is a successor configuration of $c_{j'}$ for all $1 \le j' < 2^k$ if and only if 
    $(\runint{k}, j, j+2^{kn}) \in \traceinterp{D_5 \cup D_8 \cup D_9}{(\wordtotrace(\tau\cdot a))}$. 
    
    \item Let $\tau = (\eta_0, t_0) \cdots (\eta_{m-1}, t_{m-1})$ be an event trace where each event is unique in $\tau$, and let $0 \le j < m-2\cdot 2^n$.
    Then, $(\runint{k}, t_{j}, t_{j+2^{kn}}) \in \traceinterp{D_5 \cup D_8 \cup D_9}{(\tau)}$ if and only if $\tracetowordcenter((\eta_j,t_j)\cdots(\eta_{j+ 2^{kn}},t_{j+2^{kn}}))$ has the form~$\divi c_1 \divi \cdots \divi c_{2^k}$ for configurations~$c_1,\ldots, c_{2^k}$ where $c_{j'+1}$ is a successor configuration of $c_{j'}$ for all $1 \le j' < 2^k$.
\end{enumerate}
\end{proposition}

Thus, we can identify runs of length~$2^n$, which is the maximal length we have to consider, as the running time of $\tm$ is bounded exponentially.
Hence, as we have already described how to identify an accepting configuration (see $D_7$), it remains to identify the initial configuration~$(q_\init, x_0) x_1 \cdots x_{|x|-1}\blank^{2^n-(|x|+1)} $, where $\blank$ is the blank symbol of $\tm$. 
To this end, let $n_0$ be the smallest power of two that is larger than $|x|$, which is at most $2\cdot |x|$. Also, let $x'_0 = \divi$, $x_1' = (x_0,q_\init)$, and $x_i' = x_{i-1}$ for $1 < i < |x|$ and $x_i' = \blank$ for $|x| \le i < n_0$, i.e., $x' = x_0' \cdots x_{n_0-1}$ has length~$n_0 = 2^{\log_2(n_0)}$.

Now,  consider the fresh identifiers~$\initint{s}{e}$ with $0 \le s < e \le n_0-1$ and the following set~$D_{10}$ of rules:
\begin{align*}
 &\set{\dfmeetrule{\initint{s}{s+1}}{a}{b} \mid\\
 &\phantom{m}s \in \set{0,1,\ldots, n_0-2}\text{, } \centerel(a) = x_s' \text{, and } \centerel(b) = x_{s+1}'}\ \cup\\
& \set{ \dfmeetrule{\initint{s}{e}}{\initint{s}{m}}{\initint{m+1}{e}} \mid\\
&s,m,e \in \set{0,1,\ldots, n_0-1} \text{ with } s < m \text{ and }  m+1< e }.
\end{align*}

\begin{proposition}
    \hfill
\begin{enumerate}
    \item Let $w_0 \cdots w_{m-1} \in \Sigma^*$, $0 \le s < e \le m-1$, and let $a \in \Sigma$ be some identifier for the ancillary event. 
    Then, $\centerel(w_j) \cdots \centerel(w_{j'}) = x_e' \cdots x_s'$ if and only if\linebreak $(\initint{s}{e}, j, j'+1) \in \traceinterp{D_{10}}{(\wordtotrace(\tau\cdot a))}$.

    \item Let $\tau = (\eta_0, t_0) \cdots (\eta_{m-1}, t_{m-1})$ be an event trace 
    where each event is unique in $\tau$, and let $0 \le j < j' \le m-1$.
    Then, $(\initint{s}{e}, t_{j}, t_{j'}) \in \traceinterp{D_{10}}{(\tau)}$ if and only if $\tracetowordcenter((\eta_j,t_j)\cdots(\eta_{j'},t_{j'})) = x_e' \cdots x_s'$.

    \end{enumerate}

\end{proposition}

Thus, we can identify the initial~$n_0 = 2^{\log (n_0)}$ letters of the initial configuration.
All remaining letters of the initial configuration are $\blank$, i.e., we can use the $\blankint{k}$-labeled intervals introduced in $D_6$.
Consider the fresh identifiers~$\initinttwo{k}$ for $\log(n_0) < k \le n $ and the set $D_{11}$ of rules:
\begin{align*}
    &\set{ \dfmeetrule{\initinttwo{\log(n_0)+1}}{\initint{0}{n_0-1}}{\blankint{\log(n_0)}}  }\cup\\
    &\set{ \dfmeetrule{\initinttwo{k}}{\initinttwo{k-1}}{\blankint{k-1}} \mid 1 < k \le n }.
\end{align*}

\begin{proposition}
\hfill
\begin{enumerate}

    \item Let $w_0 \cdots w_{m-1} \in \Sigma^*$, $0 \le j < m- 2\cdot 2^n$, and let $a \in \Sigma$ be some identifier for the ancillary event. 
    Then, $w_{[j,j+2^n]}$ has the form~$\divi c$ where $c$ is the initial configuration on $w$ if and only if 
    $(\initinttwo{n}, j, j+2^n) \in \traceinterp{D_5 \cup D_6 \cup D_{11}}{(\wordtotrace(\tau\cdot a))}$. 
    
    \item Let $\tau = (\eta_0, t_0) \cdots (\eta_{m-1}, t_{m-1})$ be an event trace where each event is unique in $\tau$, and let $0 \le j < m-2\cdot 2^n$.
    Then, $(\initinttwo{n},t_{j},t_{j+ 2^n}) \in \traceinterp{D_5 \cup D_6 \cup D_{11}}{(\tau)}$ if and only if $\tracetowordcenter((\eta_j,t_j)\cdots(\eta_{j+ 2^n},t_{j+2^n}))$ has the form~$\divi c$ where $c$ is the initial configuration on $w$.

\end{enumerate}

\end{proposition}

Now, all that remains is to put all pieces (intervals) together. 
Generating an interval labeled by the target identifier~$\target^1$ is possible if and only if there is a run infix of length~$2^n$ that starts with the initial configuration and ends with an accepting one. Let $D_{12}$ be the set containing the following three rules for fresh identifiers~$\target'$, and $\target$.
\begin{align*}
 &\dfstartrule{\target'}{\initinttwo{n}}{\runint{2n}}\\
 &\dffinishrule{\target}{\accid{n}{1}}{\target'}\\
\end{align*}

Now, we can complete the prof of our lower bound (Theorem~\ref{lemma_cyclefree_lb})
\begin{proof}
Combining all the facts collected about the rules in $D_{\tm,x} = \bigcup_{1}^{12} D_i$, we obtain that $\tm$ accepts $x$ if and only if and only if there is an input trace~$\tau$ such that there is an $\target$-labeled interval in $\traceinterp{D_{\tm,x}}{\tau}$.
As $\tm$ is a nondeterministic Turing machine with exponential running time, and $D_{\tm,x}$ is acyclic, we obtain that satisfiability for cycle-free data-free \nfer is $\nexptime$-hard.
\qed\end{proof}

\section{Evaluation of Data-Free \texorpdfstring{\nfer}{Nfer}}
\label{sec:evaluation}


The evaluation problem for \nfer asks, given a specification~$D$, a trace~$\tau$ of events, and a target identifier~$\target$, is there an $\target$-labeled interval in $\traceinterp{D}{\tau}$?
The problem has been extensively studied in the presence of data, with complexities ranging from undecidable (for arbitrary data and cycles in the rules) to $\ptime$ (for finite data under the minimality constraint). We refer to~\cite{kauffman2024complexity} for an overview of the results.
One case that has not been considered thus far is the complexity of the evaluation problem for data-free \nfer. 

Obviously, the result from~\cite{kauffman2024complexity} for finite-data covers the case of data-free specifications, 
but without the ``minimality'' meta-constraint that artificially limits the size of the result, evaluation with only inclusive rules
is $\pspace$-complete (without cycles) and respectively $\exptime$-complete (with cycles).
Here, we show that these results depend on the availability of (finite) data:  data-free \nfer can be evaluated in polynomial time (even without minimality).

\begin{theorem}
The evaluation problem for data-free \nfer is in \ptime.
\end{theorem}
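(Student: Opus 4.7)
The plan is to exploit the fact that, without data, the universe of possible intervals is polynomially bounded, so the standard fixed-point computation underlying $T[\![D]\!]$ can be carried out in polynomial time. First I would observe a basic but crucial invariant: inspecting Tables~\ref{tab:inc-constraints} and~\ref{tab:exc-constraints}, every clock predicate sets $\Startof{i}$ and $\Endof{i}$ to values that already appear as start or end timestamps of the matched intervals $i_1,i_2$. Starting from the initial pool $\{ \initf{e} : e \in \tau\}$, whose timestamps are exactly those appearing in $\tau$, a straightforward induction on rule applications shows that no rule ever introduces a fresh timestamp. Consequently, every interval that can possibly appear in $\traceinterp{D}{\tau}$ is of the form $(\eta,s,e)$ with $\eta$ among the (finitely many) identifiers mentioned in $D$ or $\tau$ and $s,e$ among the at most $|\tau|$ timestamps in $\tau$.

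Let $I$ be the number of identifiers appearing in $D$ or $\tau$ and $N$ the number of distinct timestamps in $\tau$. The above observation gives a universe $U$ of at most $I\cdot N^2$ candidate intervals, which is polynomial in $|D|+|\tau|$. Next I would spell out the evaluation algorithm guided by the semantics $T[\![D]\!]$: compute $G(D)$, its strongly connected components $\mathcal{D}$ (this is polynomial), and a topological sort. For each component $D_i$ in order, maintain the current pool $\pi\subseteq U$ and iteratively compute $\pi \leftarrow \pi\cup S[\![D_i]\!]\pi$ until a fixed point is reached. Exclusive rules, which are forbidden from cycles by the semantics, are processed at a point where all identifiers on their right-hand sides have stabilized, so their negation check is sound.

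Termination and polynomial runtime follow from three bounds. Since pools only grow and are contained in $U$, each fixed-point loop terminates after at most $|U|$ iterations. Computing $R[\![\delta]\!]\pi$ for a single rule requires, for each candidate interval in $U$, scanning all pairs (respectively, singletons with a universally quantified absence check) in $\pi$; this is $O(|U|^3)$, hence polynomial. Summing over rules and iterations, the full evaluation runs in polynomial time, and we output "yes" iff the final pool contains an interval labeled $\target$.

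The main obstacle, such as it is, will be to argue rigorously that the iterative algorithm faithfully computes $T[\![D]\!]$ rather than merely an approximation: in particular, that processing SCCs in topological order, with inclusive cycles iterated to a fixed point and exclusive rules evaluated only after their dependencies have stabilized, exactly matches the definition of $T[\![D]\!]$ from Section~\ref{sec:nfer}. This is essentially the same soundness argument already implicit in the discussion around the $\delta_1,\delta_2,\delta_3$ example in Section~\ref{sec:nfer}, extended to the general case by induction on the topological order of components, using monotonicity of inclusive rules inside each SCC and the cycle-freeness of the dependencies of exclusive rules.
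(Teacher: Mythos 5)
Your proposal is correct and follows essentially the same route as the paper: both rest on the observation that rule application never creates new timestamps, so $\traceinterp{D}{\tau}$ lives in a universe of at most $|\Ident|\cdot k^2$ intervals, and the fixed-point evaluation over this polynomial universe runs in polynomial time. The ``main obstacle'' you flag at the end is actually vacuous, since $\traceinterp{D}{\tau}$ is \emph{defined} via the SCC-decomposition, topological sort, and per-component fixed-point iteration you describe, so there is no gap between your algorithm and the semantics to bridge.
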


\begin{proof}
Consider an input consisting of a specification~$D$, a trace~$\tau$ of events, and a target identifier~$\target$, and let $k$ be the number of unique timestamps in $\tau$.

Recall that an interval is completely specified by its identifier in $\Ident$ and its starting and ending timestamp. 
Hence, as the application of rules does not create new timestamps (cf.\ Table~\ref{tab:inc-constraints} and Table~\ref{tab:exc-constraints}), the number of intervals in $\traceinterp{D}{\tau}$ is bounded by $k^2 |\Ident|$.
Furthermore, whether a rule is applicable to two intervals can be checked in constant time.
Thus, one can compute $\traceinterp{D}{\tau}$ in polynomial time and then check whether it contains an $\target$-labeled interval.
\qed\end{proof}

\section{Conclusion and Future Work}
\label{sec:conclusion}

We have studied the complexity of the satisfiability and evaluation problems for Data-free \nfer.
We proved that the evaluation problem is in \ptime and the satisfiability problem is undecidable in the general case, but decidable for cycle-free specifications and in \ptime for specifications with only inclusive rules.

There are still open questions around the complexity of \nfer that may be interesting.
We showed that satisfiability for data-free \nfer is decidable and \nexptime-hard for cycle-free specifications, but we do not prove a tight bound and we conjecture it to be \nexptime-complete.

Another open question is if satisfiability is decidable for restricted cases of \nfer \emph{with data}, for example if specifications are cycle-free and data is finite.
We are also interested in the complexity of \emph{monitoring} \nfer.
Here and in other works, \nfer is presented with an offline semantics.  A naïve monitoring algorithm might simply recompute produced intervals each time a new event arrives, but we suspect that better monitoring complexity can be achieved without requiring assumptions beyond temporal ordering.
We hope that this work inspires others to examine the complexity of other modern \ac{rv} languages.

Finally, another interesting direction for future work is to consider the universality problem, i.e., the question of whether every specification allows to derive an interval with a given identifier. 
For practitioners deploying \nfer in a realistic scenario, universality is likely to indicate a mistake in their specification since an identifier that is produced by every input is not likely to assist in trace comprehension.
As such, the universality problem has application as a way to catch bugs before deployment.

\bibliographystyle{plain}
\bibliography{related,sean-cv}

\newcommand{\noop}[1]{}
\begin{thebibliography}{10}

\bibitem{aceto2016classification}
Luca Aceto, Dario Della~Monica, Valentin Goranko, Anna Ing{\'o}lfsd{\'o}ttir, Angelo Montanari, and Guido Sciavicco.
\newblock A complete classification of the expressiveness of interval logics of {Allen}'s relations: the general and the dense cases.
\newblock {\em Acta Informatica}, 53(3):207--246, Apr 2016.

\bibitem{allen1983maintaining}
James~F Allen.
\newblock Maintaining knowledge about temporal intervals.
\newblock {\em Communications of the ACM}, 26(11):832--843, 1983.

\bibitem{barhillel1961cfgs}
Y.~Bar-Hillel, M.~Perles, and E.~Shamir.
\newblock On formal properties of simple phrase structure grammars.
\newblock {\em STUF - Language Typology and Universals}, 14(1-4):143--172, 1961.

\bibitem{BOLANDER2007113}
Thomas Bolander, Jens~Ulrik Hansen, and Michael~R. Hansen.
\newblock Decidability of a hybrid duration calculus.
\newblock {\em Electronic Notes in Theoretical Computer Science}, 174(6):113--133, 2007.
\newblock Proceedings of the International Workshop on Hybrid Logic (HyLo 2006).

\bibitem{buchi}
J.~Richard Büchi.
\newblock Weak second-order arithmetic and finite automata.
\newblock {\em Mathematical Logic Quarterly}, 6(1-6):66--92, 1960.

\bibitem{chaochen1993duration}
Zhou Chaochen, Michael~R. Hansen, and Peter Sestoft.
\newblock Decidability and undecidability results for duration calculus.
\newblock In {\em Theoretical Aspects of Computer Science}, STACS '93, pages 58--68. Springer-Verlag, 1993.

\bibitem{CHAOCHEN1991269}
Zhou Chaochen, C.A.R. Hoare, and Anders~P. Ravn.
\newblock A calculus of durations.
\newblock {\em Information Processing Letters}, 40(5):269--276, 1991.

\bibitem{chomsky1959normal}
Noam Chomsky.
\newblock On certain formal properties of grammars.
\newblock {\em Information and Control}, 2(2):137--167, 1959.

\bibitem{elgot}
Calvin~C. Elgot.
\newblock Decision problems of finite automata design and related arithmetics.
\newblock {\em Transactions of the American Mathematical Society}, 98(1):21--51, 1961.

\bibitem{goranko2004roadmap}
Valentin Goranko, Angelo Montanari, and Guido Sciavicco.
\newblock A road map of interval temporal logics and duration calculi.
\newblock {\em Journal of Applied Non-Classical Logics}, 14(1-2):9--54, 2004.

\bibitem{gtw02}
Erich Gr{\"{a}}del, Wolfgang Thomas, and Thomas Wilke, editors.
\newblock {\em Automata, Logics, and Infinite Games: {A} Guide to Current Research}, volume 2500 of {\em LNCS}. Springer, 2002.

\bibitem{halpern1983itl}
Joseph Halpern, Zohar Manna, and Ben Moszkowski.
\newblock A hardware semantics based on temporal intervals.
\newblock In {\em Automata, Languages and Programming}, pages 278--291. Springer, 1983.

\bibitem{halpern1991propositional}
Joseph~Y. Halpern and Yoav Shoham.
\newblock A propositional modal logic of time intervals.
\newblock {\em J. ACM}, 38(4):935--962, 1991.

\bibitem{kauffman2023software}
Sean Kauffman.
\newblock Log analysis and system monitoring with nfer.
\newblock {\em Science of Computer Programming}, 225, 01 2023.

\bibitem{kauffman2020palisade}
Sean Kauffman, Murray Dunne, Giovani Gracioli, Waleed Khan, Nirmal Benann, and Sebastian Fischmeister.
\newblock Palisade: A framework for anomaly detection in embedded systems.
\newblock {\em Journal of Systems Architecture}, 113:101876, 2021.

\bibitem{kauffman2017mining}
Sean Kauffman and Sebastian Fischmeister.
\newblock Mining temporal intervals from real-time system traces.
\newblock In {\em International Workshop on Software Mining (SoftwareMining'17)}, pages 1--8. IEEE, 2017.

\bibitem{kauffman2016nfer}
Sean Kauffman, Klaus Havelund, and Rajeev Joshi.
\newblock nfer--a notation and system for inferring event stream abstractions.
\newblock In {\em International Conference on Runtime Verification (RV'16)}, volume 10012 of {\em LNCS}, pages 235--250. Springer, 2016.

\bibitem{kauffman2017inferring}
Sean Kauffman, Klaus Havelund, Rajeev Joshi, and Sebastian Fischmeister.
\newblock Inferring event stream abstractions.
\newblock {\em Formal Methods in System Design}, 53:54--82, 2018.

\bibitem{kauffman2016towards}
Sean Kauffman, Rajeev Joshi, and Klaus Havelund.
\newblock Towards a logic for inferring properties of event streams.
\newblock In {\em International Symposium on Leveraging Applications of Formal Methods (ISoLA'16)}, volume 9953 of {\em LNCS}, pages 394--399. Springer, 2016.

\bibitem{DBLP:conf/rv/KauffmanLZ24}
Sean Kauffman, Kim~Guldstrand Larsen, and Martin Zimmermann.
\newblock The complexity of data-free nfer.
\newblock In Erika {\'{A}}brah{\'{a}}m and Houssam Abbas, editors, {\em {RV} 2024}, volume 15191 of {\em LNCS}, pages 174--191. Springer, 2024.

\bibitem{kauffman2022complexity}
Sean Kauffman and Martin Zimmermann.
\newblock The complexity of evaluating nfer.
\newblock In {\em International Symposium on Theoretical Aspects of Software Engineering (TASE'22)}, volume 13299 of {\em LNCS}, pages 388--405. Springer, 07 2022.

\bibitem{kauffman2024complexity}
Sean Kauffman and Martin Zimmermann.
\newblock The complexity of evaluating nfer.
\newblock {\em Science of Computer Programming}, 231, 2024.

\bibitem{montanari2010maximal}
Angelo Montanari, Gabriele Puppis, and Pietro Sala.
\newblock Maximal decidable fragments of halpern and shoham's modal logic of intervals.
\newblock In {\em Automata, Languages and Programming}, pages 345--356. Springer, 2010.

\bibitem{montanari2010abb}
Angelo Montanari, Gabriele Puppis, Pietro Sala, and Guido Sciavicco.
\newblock Decidability of the interval temporal logic {ABB} over the natural numbers.
\newblock In {\em Theoretical Aspects of Computer Science}, STACS '10, pages 597--608. {Schloss Dagstuhl} - Leibniz-Zentrum für Informatik, 2010.

\bibitem{sadman2025visualizing}
Nafiz Sadman, Nastaran Kian~Ersi, and Sean Kauffman.
\newblock Visualizing temporal interval hierarchies.
\newblock In {\em {NASA} Formal Methods {(NFM'25)}}, volume 15682 of {\em LNCS}. Springer, 06 2025.

\bibitem{stockmeyer}
Larry~J. Stockmeyer.
\newblock {\em The Complexity of Decision Problems in Automata Theory and Logic}.
\newblock PhD thesis, MIT, 1974.

\bibitem{trakhtenbrot}
Boris~A. Trakhtenbrot.
\newblock Finite automata and the logic of one-place predicates.
\newblock {\em Siberskii Matematicheskii Zhurnal}, 3(1):103--131, 1962.

\end{thebibliography}

\end{document}